\documentclass[lettersize,journal]{IEEEtran}

\normalsize
\usepackage{amssymb, amsmath, epsfig, booktabs, amsthm, epstopdf, color, cite, bm}
\usepackage{graphicx}
\usepackage{subfig,float,stfloats,flushend}
\usepackage{psfrag}
\usepackage{paralist}
\usepackage{color}
\usepackage{setspace}
\usepackage{mathrsfs}
\usepackage{xcolor}
\usepackage[normalem]{ulem} 


\allowdisplaybreaks[4]

\newtheorem{theorem}{Theorem}
 
\newtheorem{lemma}{Lemma}

\newtheorem{remark}{Remark}


\bibliographystyle{ieeetr}




\renewcommand{\footnoterule}{%
  \kern -3pt
  \hrule width 150pt height .2pt
  \kern 2pt
}

\begin{document}

\title{\huge On Rate-Splitting With Non-unique Decoding In Multi-cell Massive MIMO Systems}

\normalsize

\author{\normalsize Meysam~Shahrbaf~Motlagh,~Subhajit~Majhi,~Patrick~Mitran,~Hideki~Ochiai
\thanks{
This work was supported in part
by the Natural Sciences and Engineering Research Council of Canada and Cisco. 

Meysam Shahrbaf Motlagh and Patrick Mitran are with the University of Waterloo, ON, Canada. Subhajit Majhi contributed to this work while he was affiliated with the University of Waterloo, and is currently with Qualcomm India. Hideki Ochiai is with Yokohama National University, Yokohama, Japan. \newline
This paper was presented in part in \cite{bsc2021} and is based on the first author's PhD dissertation \cite{myThesis}.}}
%
%
%
%
\maketitle
\begin{abstract}
We consider the downlink of a multi-cell massive MIMO system suffering from asymptotic rate saturation due to pilot contamination. As opposed to treating pilot contamination interference as noise (TIN), we study the performance of decoding the pilot contamination interference. We model pilot-sharing users as an interference channel (IC) and study the performance of schemes that decode this interference \textit{partially} based on rate-splitting (RS), and compare the performance to schemes that decode the interference in its \emph{entirety} based on simultaneous unique decoding (SD) or non-unique decoding (SND). For RS, we non-uniquely decode each layer of the pilot contamination interference and use one common power splitting coefficient per IC. Additionally, we establish an achievable region for this RS scheme. Solving a maximum symmetric rate allocation problem based on linear programming (LP), we show that for zero-forcing (ZF) with spatially correlated/uncorrelated channels and with a practical number of BS antennas, RS achieves significantly higher spectral efficiencies than TIN, SD and SND. Furthermore, we numerically examine the impact of increasing the correlation of the channel across antennas, the number of users as well as the degree of shadow fading. In all cases, we show that RS maintains significant gain over TIN, SD and SND.
\end{abstract}
\begin{IEEEkeywords}
\footnotesize Massive MIMO, Pilot contamination, Rate-splitting, Non-unique decoding, Interference decoding.
\end{IEEEkeywords}
\IEEEpeerreviewmaketitle
\section{Introduction}
\IEEEPARstart{I}{n} massive multi-input multi-output (MIMO) systems operating in time-division duplex (TDD) mode, channel state information (CSI) is estimated using uplink orthogonal training pilot sequences. However, since the channel coherence interval is finite, only a limited number of orthogonal pilots are available for channel estimation. In practice, a common approach taken to cope with this limitation is to re-use a set of orthogonal pilot sequences in different cells; hence, creating pilot contamination interference. As the number of BS antennas $M$ becomes large, this coherent interference scales at the same rate as the desired signal in the asymptotic limit of $M \rightarrow \infty$. Hence, by treating this interference as noise, the per-user rate asymptotically converges to a finite limit, which in turn saturates the benefits of using more antennas \cite{bjornson2017book}.  

While fully decoding interference or fully treating it as noise are two common strategies for interference management, it is known that these extreme strategies are not always optimal. For instance, treating interference as noise (TIN) is only preferred when interference is weak \cite{geng2015optimality}, whereas simultaneous unique decoding (SD) is only preferred when interference is strong \cite{bandemer2011interference}. Furthermore, while the results of \cite{8913624} revealed that simultaneous non-unique decoding (SND) outperforms both of these schemes, due to the structure of the capacity region of SND as the union of multiple-access channel (MAC) capacity regions, the SND decoder is still \textit{effectively} faced with only two options: treating each interfering signal as noise or fully decoding it; although which one to choose can be dynamically adapted at each receiver based on the strength of each interfering signal. Therefore, the proposed schemes of \cite{8913624} do not have the flexibility to decode only part of the interference while treating the remaining part as noise. For instance, such flexible decoding can be obtained by the celebrated Han-Kobayashi (HK) scheme \cite{han1981new}, which provides the best known achievable performance for the two-user interference channel (IC).

\subsection{Contributions}
Motivated in part by the recent introduction of practical sliding-widow codes that can, in principle, achieve 
the HK inner bound for the two-user IC \cite{wang2020sliding} as well as their extension to practical 5G settings \cite{kim2016interference}, in this work we propose a \textit{partial} interference decoding scheme to combat pilot contamination. Specifically, a novel scheme based on rate-splitting (RS) and superposition coding is proposed that can be applied to an IC with an arbitrary number of users. In the proposed partial decoding strategy, all users' messages are partitioned into two independent layers, an inner and an outer layer, so that each receiver can \textit{partially} decode the pilot contamination interference, if advantageous to do so. By bridging the extreme strategies of fully decoding pilot contamination interference or fully treating it as noise, the partial interference decoding scheme of this paper achieves significantly higher spectral efficiencies (SE) compared to TIN, SD and SND for the same number of BS antennas. 

We summarize the major contributions of this paper as follows:

\begin{itemize}
\item Using a worst-case uncorrelated noise technique, we derive a general achievable rate lower bound for the downlink of a multi-cell massive MIMO system that applies joint decoding to each set of pilot-sharing users. We then specialize this general lower bound to the case of ZF with a spatially uncorrelated Rayleigh fading channel model.
\item Motivated by the well-known HK scheme for a two-user IC, we propose a novel partial interference decoding scheme based on rate-splitting for an $L$-user IC, simply referred to as RS scheme in the following, that uses only one common power splitting coefficient per IC. 
In addition, we establish an achievable region for this RS scheme using the non-unique decoding technique. Since the rates of individual layers for each set of pilot-sharing users need to be adjusted globally across the entire network, the proposed RS scheme is implemented in a centralized manner (e.g., with the help of a central entity). 
\item To compare the performance of the different schemes, we numerically study the performance of maximum symmetric rate allocation for both spatially correlated and uncorrelated Rayleigh fading channels. 
We show that the maximum symmetric rate allocation problem can be formulated in terms of multiple linear programming (LP) problems. Moreover, an achievable sub-region of the RS scheme is introduced that provides an achievable lower bound to the performance of RS. In all cases, we observe that by numerically optimizing the power splitting coefficient, the proposed RS scheme produces significantly larger SEs compared to \textit{all the other schemes} for a practical number of BS antennas $M$, and its performance gain improves by increasing $M$. 
\item The impact of increasing the correlation of the channel across antennas, the number of users and the degree of shadow fading are also numerically studied. It is observed that while increasing the number of users and shadow fading degrade the performance of TIN, SND and RS, the gains provided by SND and RS over TIN increase; thus, demonstrating the importance of these schemes in practical settings. In addition, increasing the correlation magnitude of the channel improves the performance of TIN, SND and RS. Nonetheless, we observe that with a practical value of $M$, in all scenarios RS provides a significant gain over TIN, SD and SND.
\item Lastly, we show that by replacing the numerically-optimized value of the power splitting coefficient with its pre-computed average value, the performance loss is quite negligible; thus reducing the search space of the optimization problem in practical settings.
\end{itemize}
Note that the work of \cite{8913624} proposed interference decoding schemes based on SD and SND only for the uplink of a multi-cell massive MIMO system, assuming maximum ratio combining (MRC) and an uncorrelated Rayleigh fading channel model. In this work, the downlink setting is considered using zero-forcing (ZF) with spatially correlated/uncorrelated Rayleigh fading channel models, and a novel RS scheme using non-unique decoding is proposed and compared to SD and SND. Specifically, it is revealed that the use of ZF with RS provides significantly larger rates than TIN, SD and SND. Consequently, compared to the results of \cite{8913624}, the number of BS antennas required to outperform TIN is reduced by more than a factor of $\approx 100$.
\subsection{Related Work}\label{sec:1}
The RS technique was first introduced by Carleial \cite{carleial1978interference} for a two-user IC and was later used in the seminal work of \cite{han1981new} to establish the best known achievable performance for a two-user IC, which contains all other known schemes as special cases (e.g., joint decoding or TIN). The idea of splitting users' messages in conjunction with superposition coding has been widely adopted in the literature for interference mitigation in cellular networks \cite{dahrouj2011multicell, sahin2011interference, che2015successive, medra2018robust}. The work of \cite{dahrouj2011multicell} proposed an RS-based scheme in the downlink of a multi-cell network with perfect CSI to jointly design beamforming vectors for public and private parts. Therein, it was shown that by doing single-user successive decoding with a fixed decoding order, higher rates are achieved by this RS scheme compared to conventional TIN. Motivated by the HK scheme, \cite{sahin2011interference} proposed an interference cancellation technique via message splitting at the transmitter along with successive interference cancellation (SIC) decoding at the receiver that maximizes the sum-rate in heterogeneous networks. A similar technique has been adopted in \cite{che2015successive} to mitigate inter-cell interference in a multi-cell multi-user MIMO interference network. In another line of work, the idea of message splitting has been used to enhance the efficiency of medium access techniques. For instance, the work of \cite{mao2018rate} has proposed a rate-splitting multiple access technique that improves upon the performance of schemes such as space-division multiple access and non-orthogonal multiple access. 

Recently, RS has also been utilized in the context of massive MIMO communications with imperfect CSI \cite{dai2016rate,papazafeiropoulos2017rate,thomas2020rate}. Specifically, a novel hierarchical RS scheme is proposed in \cite{dai2016rate} for the downlink of a single-cell massive MIMO system operating in frequency-division duplex (FDD) mode. Therein, the precoding vector of each public part is designed so as to maximize the minimum rate of the public part achieved by each user. In \cite{papazafeiropoulos2017rate}, the benefits of RS are investigated to tackle adverse effects of hardware impairments in the downlink of a TDD-based massive multi-input single-output broadcast channel. Lastly, the work of \cite{thomas2020rate} has addressed the pilot contamination problem in a single-cell massive MIMO system operating in TDD mode, where all users inside the cell share the \textit{same} pilot sequence. While the authors have shown that the decoding scheme of \cite{thomas2020rate} achieves higher sum SE compared to the case without RS, by applying a single-user SIC decoder the intra-cell interference is still treated as noise.
  
\textit{Notation}: Boldface upper and lower case symbols are used to represent matrices and column vectors, respectively. The all-zero vector and $M \times M$ identity matrix are denoted by $\pmb{0}$ and $\pmb{I}_M$, respectively. $\mathcal{CN}(\pmb{m}, \pmb{R})$ denotes the circular symmetric complex Gaussian distribution with mean vector $\pmb{m}$ and covariance matrix $\pmb{R}$. The superscripts $(.)^T$ and $(.)^{\dag}$ denote transpose and the Hermitian transpose, respectively. The Shannon rate function is denoted by $C (x) = \log (1 + x)$, and $I (x; y)$ represents the mutual information between two random variables $x$ and $y$.
\section{Preliminaries}\label{sec:2}   
In the following, we first present the cell configuration, 
user placement 
and the channel propagation model in Section~\ref{sec:3}. 
This is followed by the downlink data transmission model in Section~\ref{sub:2}. Finally, in Section~\ref{sec:csi}, the user pilot assignment and pilot-based channel estimation are presented.     
\subsection{System Model}\label{sec:3}
A cellular network comprising $L$ cells, each having one BS located at the cell center and equipped with $M$ antennas serving $K$ single-antenna users, where $M > K$, is considered. Assuming a spatially correlated channel model, the channel matrix between BS $j$ and the users of cell $l$ is denoted by $\pmb{G}_{jl}=[\pmb{g}_{j1l},\pmb{g}_{j2l}, ..., \pmb{g}_{jKl}] \in \mathbb{C}^{M\times K}$. More precisely, the channel vector $\pmb{g}_{jkl} \in \mathbb{C}^{M\times 1}$ associated with user $k$ in cell $l$ is described by $\pmb{g}_{jkl} = \pmb{R}_{jkl}^{1/2} \pmb{h}_{jkl},$ where $\pmb{h}_{jkl} \sim \mathcal{CN} (\pmb{0}, \pmb{I}_M)$, and $\pmb{R}_{jkl} \in \mathbb{C}^{M\times M}$ is the spatial correlation matrix of the channel, i.e., $\pmb{g}_{jkl} \sim \mathcal{CN} (\pmb{0}, \; \pmb{R}_{jkl})$. 
A standard block-fading model is considered where the channels are constant over one coherence interval with one independent realization in each coherence block. Furthermore, followed by the TDD assumption, the uplink and downlink channels are reciprocal.
\subsection{Data Transmission Model}\label{sub:2}
During downlink data transmission, the $i^{\textrm{th}}$ user in cell $l$ receives the baseband signal
\begin{equation}\label{eq:3}
y_{il} = \sum_{j=1}^L \sqrt{\rho_{\textrm{dl}}} \pmb{g}_{jil}^{\dag} \pmb{x}_j + z_{il} , 
\end{equation}
where $\pmb{x}_j = \left[ x_j[1], ..., x_j[M] \right]^T $ is the transmit signal of BS $j$, $\rho_{\textrm{dl}}$ is the per-user transmit power of the BS, and $z_{il} \sim \mathcal{CN} (0, \; 1)$ is the receiver noise. Thus, $\rho_{\textrm{dl}}$ can also be interpreted as the per-user transmit signal-to-noise ratio (SNR) of the BSs. Also, defining $\pmb{w}_{jkj} \in \mathbb{C}^{M \times 1}$ as the precoding vector of user $k$ at BS $j$, we have 
\begin{align}\label{eq:3:0}
\pmb{x}_j &= \frac{1}{\sqrt{\lambda_j}}\sum_{k=1}^K \pmb{w}_{jkj} s_j [k] = \frac{\pmb{W}_{jj} \pmb{s}_j }{\sqrt{\lambda_j}}, 
\end{align}
where $\pmb{s}_j = \left[ s_j [1], ..., s_j [K] \right]^T $ is the vector of data symbols intended for the $K$ users in cell $j$, $\pmb{W}_{jj} = \left[ \pmb{w}_{j1j}, ..., \pmb{w}_{jKj} \right] \in \mathbb{C}^{M \times K}$, and $\lambda_j$ is a normalization factor to make sure the power constraint $\mathbb{E} [  \pmb{x}_j^{\dag} \pmb{x}_j ] / K = 1$ is satisfied at BS $j$, i.e., the downlink per-user transmit power of BS $j$ equals $\rho_{\textrm{dl}}$.
\subsection{CSI Estimation}\label{sec:csi}
Following \cite{adhikary2017uplink,ashikhmin2018interference}, we assume that the same pilot matrix $\pmb{\Psi} =[ \pmb{\psi}_1, \pmb{\psi}_2, ..., \pmb{\psi}_K ] ^T \in \mathbb{C}^{K \times K}$ incorporating orthonormal pilot sequences of length $K$, i.e., $\pmb{\Psi} \pmb{\Psi}^{\dagger} = \pmb{I}_K$, is shared in all cells across network, thus leading to pilot contamination. During the training stage, user $k \in \lbrace 1, ..., K \rbrace$ in cell $j$ transmits pilot $\pmb{\psi}_k$ to BS $j, j=1, ..., L$. After receiving the transmitted pilots, BS $j$ computes the observation signal 
\begin{equation}\label{eq:5}
\pmb{r}_{jk} = \sum_{l=1}^L \sqrt{\rho_{\textrm{p}}} \pmb{g}_{jkl} + \pmb{\tilde{z}}_{jk}, 
\end{equation} 
where $\pmb{\tilde{z}}_{jk} \sim \mathcal{CN} (\pmb{0}, \; \pmb{I}_M)$ and $\rho_{\textrm{p}}$ can be interpreted as the pilot SNR, which is generally a function of the uplink transmit power of users and the length of 
the pilot sequences (i.e., $K$). Therefore, the minimum mean-squared error (MMSE) estimate $\pmb{\hat{g}}_{jkj}$ of $\pmb{g}_{jkj}$, based on the observation $\pmb{r}_{jk}$ is found as 
\begin{align}
\nonumber \pmb{\hat{g}}_{jkj} &= \sqrt{\rho_{\rm p}} \pmb{R}_{jkj} \mathbb{E} \left[ \pmb{r}_{jk} \pmb{r}_{jk}^{\dagger} \right]^{-1} \pmb{r}_{jk} \\
&= \sqrt{\rho_{\rm p}} \pmb{R}_{jkj} \left( \sum\nolimits_{l=1}^{L} \rho_{\rm p} \pmb{R}_{jkl} + \pmb{I}_M \right)^{-1} \pmb{r}_{jk} \label{eq:6} \\
\nonumber &= \sqrt{\rho_{\rm p}} \pmb{R}_{jkj} \left( \sum\nolimits_{l=1}^{L} \rho_{\rm p} \pmb{R}_{jkl} + \pmb{I}_M \right)^{-1} \\
 &\hspace{5mm}\times\left( \underbrace{\sqrt{\rho_{\textrm{p}}} \pmb{g}_{jkj}}_{\textrm{Intended channel}} + \underbrace{ \sum\nolimits_{l=1, l\neq j}^L \sqrt{\rho_{\textrm{p}}} \pmb{g}_{jkl}}_{\textrm{Pilot contamination}} + \pmb{\tilde{z}}_{jk} \right), \label{eq:6:1}
\end{align}
i.e., the estimate of $\pmb{g}_{jkj}$ is contaminated by the channel of users in other cells sharing the same pilot sequence as user $k$ in cell $j$. Following the orthogonality property of MMSE estimation, $\pmb{g}_{jkj}$ can be decomposed as $\pmb{g}_{jkj} = \pmb{\hat{g}}_{jkj} + \pmb{\epsilon}_{jkj}$, where $\pmb{\epsilon}_{jkj}$ is the uncorrelated estimation error. From \eqref{eq:6:1}, it follows that the distribution of the channel estimate $\pmb{\hat{g}}_{jkj}$ and the estimation error $\pmb{\epsilon}_{jkj}$ are 
\begin{align}\label{eq:dist}
\pmb{\hat{g}}_{jkj} &\sim \mathcal{CN} \left(\pmb{0}, \; \rho_{\rm p} \pmb{R}_{jkj} \left( \sum_{l=1}^{L} \rho_{\rm p} \pmb{R}_{jkl} + \pmb{I}_M \right)^{-1} \pmb{R}_{jkj} \right), \\
\pmb{\epsilon}_{jkj} &\sim \mathcal{CN} \left(\pmb{0}, \; \hspace{-0.5mm}\pmb{R}_{jkj} - \hspace{-0.5mm} \rho_{\rm p} \pmb{R}_{jkj} \hspace{-0.5mm} \left( \sum_{l=1}^{L} \rho_{\rm p} \pmb{R}_{jkl} + \pmb{I}_M \right)^{\hspace{-1mm} -1} \hspace{-1mm} \pmb{R}_{jkj}  \hspace{-0.5mm} \right). \label{eq:dist:1}  
\end{align}
Moreover, for the estimate of $\pmb{g}_{jkl}, l \neq j$, i.e., channels of users in other cells sharing the same pilot sequence $\pmb{\psi}_k$, provided that $\pmb{R}_{jkj}$ is invertible, using \eqref{eq:6} we have 
\begin{align}
\nonumber \pmb{\hat{g}}_{jkl} &= \sqrt{\rho_{\rm p}} \pmb{R}_{jkl} \left( \sum\nolimits_{l=1}^{L} \rho_{\rm p} \pmb{R}_{jkl} + \pmb{I}_M \right)^{-1} \pmb{r}_{jk} \\  
&= \pmb{R}_{jkl} (\pmb{R}_{jkj})^{-1} \pmb{\hat{g}}_{jkj}. \label{eq:estimate:ratio} 
\end{align}
\section{Decoding Interference Fully}
\label{sec:4}
\begin{figure}[t] 
\centering
\includegraphics[scale=0.5]{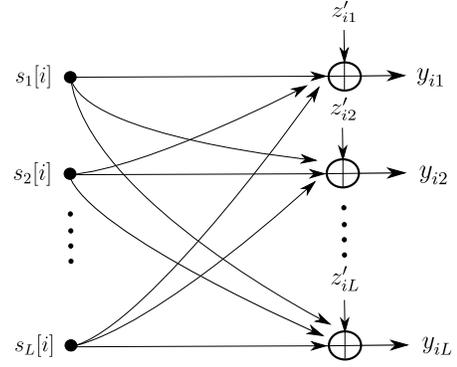}
\caption{The $L$-user IC associated with the $i^{\rm th}$ user of each cell, sharing pilot sequence $\pmb{\psi}_i$. There are a total of $K$ non-interfering such ICs in the network.\label{downlink:ic}} 
\end{figure}
In downlink, the received signal at user $i$ in cell $l$ can be re-written as 
\begin{align} 
\nonumber y_{il} &= \underbrace{ \sum_{j=1}^L  \sqrt{\dfrac{\rho_{\rm dl}}{\lambda_j}} \mathbb{E} \left[ \pmb{g}_{jil}^{\dagger} \pmb{w}_{jij} \right] s_j[i]}_{\textrm{Signal components}} \\
\nonumber &\quad + \underbrace{ \sum_{j=1}^L \sqrt{\dfrac{\rho_{\rm dl}}{\lambda_j}} \left( \pmb{g}_{jil}^{\dag} \pmb{w}_{jij} - \mathbb{E} \left[ \pmb{g}_{jil}^{\dag} \pmb{w}_{jij} \right] \right) s_j [i] }_{\textrm{Interference due to beamforming gain uncertainty}}  \\
	&\quad + \underbrace{ \sum\nolimits_{j=1}^L \sqrt{\dfrac{\rho_{\rm dl}}{\lambda_j}} \sum\nolimits_{k=1, k \neq i}^K \pmb{g}_{jil}^{\dag} \pmb{w}_{jkj} s_j [k]}_{\textrm{Inter-cell interference caused by other users}} + \underbrace{ z_{il}}_{\textrm{Noise}} \label{eq:down:1} \\
	&=   \underbrace{ \sum_{j=1}^L \zeta_{ij} s_j [i]}_{\textrm {Signal components}} + \underbrace{z_{il}^{\prime}}_{\textrm{Additive noise}}, \label{eq:down:2}  
\end{align}
where the signals of the $i^{\rm th}$ pilot-sharing users are now treated as desired signal components, $\zeta_{ij} := \sqrt{{\rho_{\rm dl}}/{\lambda_l}} \mathbb{E} [ \pmb{g}_{jil}^{\dag} \pmb{w}_{jij} ] $ is the effective channel gain and $z_{il}^{\prime}$ is the additive noise incorporating the last three terms of \eqref{eq:down:1}. By considering the received signals $y_{il}$, for $l=1, ..., L, \; i=1, ..., K$, together we have the outputs of $K$ non-interfering $L$-user ICs, with input signals $\lbrace s_j [i] \rbrace_{j=1}^L$, as in Fig.~\ref{downlink:ic}. Since the power of the terms inside the first sum in \eqref{eq:down:1} scale at the same rate as $M$ grows, one can jointly decode signals $\lbrace s_j [i] \rbrace_{j=1}^L$ in their \textit{entirety}, either uniquely as in SD or non-uniquely as in SND, and potentially eliminate the rate saturation problem as $M \rightarrow \infty$. In the following, these two interference decoding schemes will be briefly revisited for the downlink setting of this paper.
\subsection{Simultaneous Unique Decoding (SD)}\label{sec:5}
In SD, at the $i^{\rm th}$ user of cell $l$, it is assumed that each message $m_{ij} \in [ 1 : 2^{n R_{ij}} ],\; j=1, ..., L, $ (distributed uniformly) is encoded into a single codeword $\pmb{s}_j^{n} [i] (m_{ij})$ of length $n$, generated i.i.d. $\mathcal{CN} (0, 1)$. Thus, by jointly decoding all messages $\lbrace m_{ij} \rbrace_{j=1}^L$ uniquely at user $i$ of cell $l$, the decoding error probability approaches zero for $n \rightarrow \infty$ if the set of rate tuples $\left[ R_{i1}, ..., R_{iL} \right]$ is such that 
\begin{equation}\label{eq:down:3}
\sum_{j \in \Omega} R_{ij} \leq I \left(  y_{il} ; \; \pmb{s}_{\Omega} \; \Big\vert \; \pmb{s}_{\Omega^c}  \right), 
\end{equation}
for all $\Omega \subseteq \mathcal{L}$, where $\mathcal{L} := \lbrace 1, ..., L \rbrace$, $\Omega^c = \mathcal{L} \setminus  \Omega $ and $\pmb{s}_{\Omega}$ is the vector with entries $ s_j[i],\; j \in \Omega$, i.e., belongs to the capacity region of an $L$-user MAC. The set of rate vectors $\left[ R_{i1}, ..., R_{iL} \right]$ that satisfies the inequalities of \eqref{eq:down:3} defines the achievable region at user $i$ in cell $l$, denoted by $\mathcal{R}_{il}^{\rm SD}$. Finally, to obtain the achievable region network-wide for the $i^{\rm th}$ pilot-sharing users, one should take the intersection of achievable regions over all receivers, i.e., $\mathscr{R}_i^{\rm SD} = \bigcap_{l=1}^L \mathcal{R}_{il}^{\rm SD}.$
\subsection{Simultaneous Non-unique Decoding (SND)}\label{sec:6}
In SND, at the $i^{\rm th}$ user of cell $l$, the intended signal $s_l [i]$ is decoded uniquely while the pilot contamination interference signals $s_j [i], \; j \neq l$, are decoded non-uniquely \cite{bandemer2015optimal}. As such, incorrectly decoding the interfering signals does not incur any penalty, hence relaxing some of the rate constraints of SD. More precisely, user $i$ of cell $l$ finds the unique message $\hat{m}_{il}$ such that $ ( \hat{\pmb{s}}_l^{n} [i] (\hat{m}_{il})$, $\pmb{\hat{s}}_{\mathcal{L} \setminus \lbrace l \rbrace}^{n} [i] (m_{ij})$, $\pmb{y}_{il}^{n} )$ is jointly typical \textit{for some} $m_{ij}$, where $\pmb{\hat{s}}_{\mathcal{L} \setminus \lbrace l \rbrace}^{n} [i] (m_{ij})$ is the tuple of all codewords $\pmb{\hat{s}}_j^{n}[i](m_{ij})$ for $j \in \mathcal{L} \setminus \lbrace l \rbrace $. Applying the results of \cite{bandemer2015optimal} to the downlink IC associated with the $i^{\rm th}$ pilot-sharing users, the achievable region obtained by SND at user $i$ of cell $l$ is given by
\begin{equation}\label{eq:24}
\mathcal{R}_{il}^{\rm SND} = \bigcup_{\lbrace l \rbrace \subseteq \Omega \subseteq \mathcal{L}} \mathcal{R}_{\textrm{MAC}(\Omega,l)}^i,
\end{equation}
where $\mathcal{R}_{\textrm{MAC}(\Omega,l)}^i$ represents the achievable rate region obtained from uniquely jointly decoding signals $s_j [i], \; j \in \Omega$ at user $i$ in cell $l$. Therefore, $\mathcal{R}_{\textrm{MAC}(\Omega,l)}^i$ is described by the following system of inequalities
\begin{equation}\label{eq:24:24}
\sum_{j \in \omega} R_{ij} \leq I \left(  y_{il} ; \; \pmb{s}_{\omega} \; \Big\vert \; \pmb{s}_{\Omega \setminus \omega}  \right), \quad \forall \omega \subseteq \Omega,
\end{equation}
\begin{figure*}[t!]
\normalsize
\setcounter{equation}{13}
\begin{align}\label{eq:thm:2:1}
I \left(  y_{il} ; \; \pmb{s}_{\omega}  \Big\vert  \pmb{s}_{\Omega \setminus \omega} \right) &\geq C \left(  \frac{ \sum_{j \in \omega} \dfrac{\rho_{\rm dl}}{\lambda_j} \left\vert \mathbb{E} \left[ \pmb{g}_{jil}^{\dagger} \pmb{w}_{jij} \right] \right\vert^2 }{ \sum_{j=1}^L \sum_{k=1}^K \dfrac{\rho_{\rm dl}}{\lambda_j} \mathbb{E} \left[  \left\vert \pmb{g}_{jil}^{\dagger} \pmb{w}_{jkj} \right\vert^2 \right] - \sum_{j \in \Omega} \dfrac{\rho_{\rm dl}}{\lambda_j} \left\vert \mathbb{E} \left[ \pmb{g}_{jil}^{\dagger} \pmb{w}_{jij} \right] \right\vert^2 + 1} \right). 
\end{align}
\hrulefill
\end{figure*}
where $\pmb{s}_{\omega}$ is the vector with entries $ s_j[i],\; j \in \omega$. Note that $\Omega$ in \eqref{eq:24} must contain the index of the intended signal $s_l [i]$. Finally, the network-wide capacity region associated with the $i^{\rm th}$ pilot-sharing users is obtained by intersecting the achievable regions over all receivers, i.e., $\mathscr{R}_i^{\rm SND} = \bigcap_{l=1}^L \mathcal{R}_{il}^{\rm SND}$. Note that, as the interfering signals are decoded \textit{non-uniquely} at each receiver, $\mathcal{R}_{\textrm{MAC} (\Omega,l)}^i$ is unbounded in rate coordinates $R_{ij}, \; j \in  \Omega^c $. Moreover, the signals $s_j [i], \; j \in \Omega^c$, are treated as noise in $\mathcal{R}_{\textrm{MAC} (\Omega,l)}^i$. As a consequence, it is readily verified that both TIN and SD regions are strictly contained in the SND region. Also, notice that each point inside the SND region is equivalent to the rate of decoding a subset of interfering users fully, while treating the remaining ones as noise. However, under SD, an interfering signal is \textit{always} fully decoded; hence performing poorly in the weak and moderate interference regimes. 
\begin{remark}\label{snd}
Note that practical schemes, known as sliding-window coded modulation and sliding-window superposition coding, have recently been introduced in the literature to decode interference in real-world cellular networks, and also achieve performance close to the SND region \cite{wang2014sliding, park2014interference, kim2015adaptive, kim2016interference, ahn2021apparatus, kim2017methods, kwangtaik2017method}.
\end{remark}
\subsection{Achievable Lower Bounds}\label{sec:7}
The achievable regions obtained by SD and SND in \eqref{eq:down:3} and \eqref{eq:24}, respectively, are described by conditional mutual information expressions $I(.;. \vert .)$ that should be characterized using \eqref{eq:down:2}. However, since the additive noise term $z_{il}^{\prime}$ in \eqref{eq:down:2} is neither Gaussian nor independent of the signal components, its exact computation is difficult. Nevertheless, it can be verified that $z_{il}^{\prime}$ is zero-mean and also uncorrelated from the signal components. Thus, using the worst-case uncorrelated noise technique for multi-user channels \cite[Lemma~1]{8913624}, one can obtain an achievable lower bound on these mutual information terms by replacing the additive noise $z_{il}^{\prime}$ with an independent zero-mean Gaussian noise having the same variance. This is formally stated in the following lemma.
\begin{lemma}\label{thm:2}
Assuming Gaussian signaling, i.e., $\left[ s_1 [i], s_2 [i], ..., s_L [i] \right]^T \sim \mathcal{CN} \left( \pmb{0}, \; \pmb{I}_L \right)$ for $i \in \lbrace 1, 2, ..., K \rbrace $, the set of achievable lower bounds shown in \eqref{eq:thm:2:1} at the top of the next page is obtained at user $i$ in cell $l$, 
for all $\omega \subseteq \Omega$, where $\Omega$ is such that $\lbrace l \rbrace \subseteq \Omega \subseteq \mathcal{L}$. Also, the expectations are taken with respect to the channel realizations, and $\lambda_j, j=1, ..., L,$ is found based on the choice of precoding vector $\pmb{w}_{jij}$.
\end{lemma}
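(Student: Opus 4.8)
The plan is to start from the equivalent representation \eqref{eq:down:2} of the received signal, $y_{il}=\sum_{j=1}^{L}\zeta_{ij}s_j[i]+z_{il}'$, where $\zeta_{ij}=\sqrt{\rho_{\rm dl}/\lambda_j}\,\mathbb{E}[\pmb{g}_{jil}^{\dagger}\pmb{w}_{jij}]$ as in \eqref{eq:down:1} and $z_{il}'$ lumps together the beamforming-gain-uncertainty term, the inter-cell interference caused by the other $K-1$ users of each cell, and the receiver noise $z_{il}$. The first step is to check the two hypotheses required by the worst-case uncorrelated noise technique: $z_{il}'$ is zero-mean, and it is uncorrelated with each desired symbol $s_j[i]$, $j\in\mathcal{L}$. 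Zero mean is immediate since $\mathbb{E}[z_{il}]=0$, each centered beamforming-gain term has zero mean, and the symbols $s_j[k]$ with $k\neq i$ are zero-mean and independent of the channels. Uncorrelatedness holds because the uncertainty term attached to $s_{j'}[i]$ has, by construction, zero correlation with $s_{j'}[i]$ itself, while all cross terms vanish by independence of distinct data symbols and of the channel realizations from the symbols.

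Having established these properties, I would invoke \cite[Lemma~1]{8913624}: replacing $z_{il}'$ by an independent $\mathcal{CN}(0,\sigma_{il}^{2})$ noise with $\sigma_{il}^{2}:=\mathbb{E}[|z_{il}'|^{2}]$ cannot increase any conditional mutual information $I(y_{il};\pmb{s}_{\omega}\,\vert\,\pmb{s}_{\Omega\setminus\omega})$, so evaluating these quantities for the Gaussian surrogate channel yields a valid lower bound. Next I would compute $\sigma_{il}^{2}$ term by term using the mutual independence of the unit-variance data symbols and of the channels from the symbols: the uncertainty term contributes $\sum_{j=1}^{L}\tfrac{\rho_{\rm dl}}{\lambda_j}\big(\mathbb{E}[|\pmb{g}_{jil}^{\dagger}\pmb{w}_{jij}|^{2}]-|\mathbb{E}[\pmb{g}_{jil}^{\dagger}\pmb{w}_{jij}]|^{2}\big)$, the other-user interference contributes $\sum_{j=1}^{L}\tfrac{\rho_{\rm dl}}{\lambda_j}\sum_{k\neq i}\mathbb{E}[|\pmb{g}_{jil}^{\dagger}\pmb{w}_{jkj}|^{2}]$, and the receiver noise contributes $1$; combining the first two sums gives $\sigma_{il}^{2}=\sum_{j=1}^{L}\sum_{k=1}^{K}\tfrac{\rho_{\rm dl}}{\lambda_j}\mathbb{E}[|\pmb{g}_{jil}^{\dagger}\pmb{w}_{jkj}|^{2}]-\sum_{j=1}^{L}\tfrac{\rho_{\rm dl}}{\lambda_j}|\mathbb{E}[\pmb{g}_{jil}^{\dagger}\pmb{w}_{jij}]|^{2}+1$.

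In the Gaussian surrogate each conditional mutual information is that of a Gaussian MAC: conditioning on $\pmb{s}_{\Omega\setminus\omega}$ removes those components from $y_{il}$, the symbols $\pmb{s}_{\Omega^{c}}$ not being decoded act as additional independent Gaussian noise (consistent with the fact that signals indexed outside $\Omega$ are treated as noise in \eqref{eq:24:24}), and $\pmb{s}_{\omega}$ is the desired signal, so $I(y_{il};\pmb{s}_{\omega}\,\vert\,\pmb{s}_{\Omega\setminus\omega})\ge C\!\left(\frac{\sum_{j\in\omega}|\zeta_{ij}|^{2}}{\sigma_{il}^{2}+\sum_{j\in\Omega^{c}}|\zeta_{ij}|^{2}}\right)$. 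Substituting $|\zeta_{ij}|^{2}=\tfrac{\rho_{\rm dl}}{\lambda_j}|\mathbb{E}[\pmb{g}_{jil}^{\dagger}\pmb{w}_{jij}]|^{2}$ and using $\sum_{j=1}^{L}-\sum_{j\in\Omega^{c}}=\sum_{j\in\Omega}$ to absorb the $\pmb{s}_{\Omega^{c}}$ power back into the first sum collapses the denominator to $\sum_{j=1}^{L}\sum_{k=1}^{K}\tfrac{\rho_{\rm dl}}{\lambda_j}\mathbb{E}[|\pmb{g}_{jil}^{\dagger}\pmb{w}_{jkj}|^{2}]-\sum_{j\in\Omega}\tfrac{\rho_{\rm dl}}{\lambda_j}|\mathbb{E}[\pmb{g}_{jil}^{\dagger}\pmb{w}_{jij}]|^{2}+1$, which is exactly \eqref{eq:thm:2:1}.

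I expect the main obstacle to be the careful verification of the uncorrelatedness hypotheses needed to legitimately apply \cite[Lemma~1]{8913624} — in particular that the beamforming-gain-uncertainty term for \emph{every} index $j$ is uncorrelated with \emph{every} desired symbol $s_{j'}[i]$, not merely with its own — together with the bookkeeping of which symbol terms are conditioned upon versus absorbed into the effective Gaussian noise when passing from the surrogate MAC to the SINR expression. The remaining steps, namely the second-moment evaluation of $\sigma_{il}^{2}$ and the algebraic collapse of the denominator, are routine once the independence structure is made explicit.
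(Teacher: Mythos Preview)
Your proposal is correct and follows essentially the same approach as the paper: verify that $z_{il}'$ is zero-mean and uncorrelated with the desired symbols, compute its variance, and apply \cite[Lemma~1]{8913624} to obtain the Gaussian-MAC SINR expression with the $\pmb{s}_{\Omega^c}$ terms folded into the effective noise. The only cosmetic difference is that the paper obtains $\sigma_{il}^{2}$ by writing $\operatorname{var}[y_{il}]$ two ways (once via the four-term decomposition and once directly from $y_{il}=\sum_{j,k}\sqrt{\rho_{\rm dl}/\lambda_j}\,\pmb{g}_{jil}^{\dagger}\pmb{w}_{jkj}s_j[k]+z_{il}$) and subtracting, whereas you compute each contribution term by term; both routes yield the same expression.
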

\color{black}
\begin{proof}
See Appendix~A-1.
\end{proof}
Note that the numerator on the right side of \eqref{eq:thm:2:1} represents the received power of the subset $\omega \subseteq \Omega$ of the desired signal components, while in the denominator the first term is the received power of all signal components (i.e., first term in \eqref{eq:3}), the second term is the received power of all desired signal components (and these are subtracted from the first term), 
and finally the third term represents the noise power. 

Further note that this lower bound is valid in general and does not depend on specific choices of the precoding vector $\pmb{w}_{jij}$, and the expectations can be computed, if necessary, using Monte Carlo simulation. Furthermore, in the case of SD, one should use the substitution $\Omega = \mathcal{L}$ in the lower bound of \eqref{eq:thm:2:1}. In the following, we specialize the lower bound of \eqref{eq:thm:2:1} to ZF precoding with an uncorrelated Rayleigh fading channel model.

\emph{ZF with an uncorrelated channel:} Using ZF precoding, $\pmb{W}_{jj} = \pmb{\hat{G}}_{jj} ( \pmb{\hat{G}}_{jj}^{\dagger} \pmb{\hat{G}}_{jj})^{-1}$, we have $\pmb{W}_{jj}^{\dagger} \pmb{\hat{G}}_{jj} = \pmb{I}_K$, and thereby $\pmb{w}_{jkj}^{\dagger} \pmb{\hat{g}}_{jmj} = \delta_{mk}$, where $\delta_{mk}$ is the Kronecker delta function. We also assume that an uncorrelated Rayleigh fading channel model is used, i.e., $\pmb{R}_{jkl} = \beta_{jkl} \pmb{I}_M$, where $\beta_{jkl}$ is the large-scale fading coefficient between the antennas of BS $j$ and user $k$ in cell $l$, and is constant over many coherence intervals. Therefore, one can write
\begin{align}\label{eq:down:zf:a}
y_{il} &= \sum_{j=1}^L \sqrt{\dfrac{\rho_{\rm dl}}{\lambda_j^{\rm zf}}} \sum_{k=1}^K \left( \pmb{\hat{g}}_{jil}^{\dagger} + \pmb{\epsilon}_{jil}^{\dagger} \right) \pmb{w}_{jkj} s_j [k] + z_{il} \\
\nonumber &=  \underbrace{\sum_{j=1}^L \sqrt{\dfrac{\rho_{\rm dl}}{\lambda_j^{\rm zf}}} \pmb{\hat{g}}_{jil}^{\dagger} \pmb{w}_{jij} s_j [i]}_{\textrm{Signal components}} \nonumber \\
&\quad+ \underbrace{\sum_{j=1}^L \sqrt{\dfrac{\rho_{\rm dl}}{\lambda_j^{\rm zf}}} \sum_{k=1, k \neq i}^K  \pmb{\hat{g}}_{jil}^{\dagger} \pmb{w}_{jkj} s_j [k]}_{\textrm{Non-coherent interference caused by other users}} \nonumber \\
&\quad + \underbrace{\sum_{j=1}^L \sqrt{\dfrac{\rho_{\rm dl}}{\lambda_j^{\rm zf}}} \sum_{k=1}^K \pmb{\epsilon}_{jil}^{\dagger} \pmb{w}_{jkj} s_j[k]}_{\textrm{Iterference due to estimation error}} + \underbrace{z_{il}}_{\textrm{Noise}}  \nonumber  \\
\nonumber &\stackrel{\rm (a)}{=} \underbrace{\sum_{j=1}^L \sqrt{\dfrac{\rho_{\rm dl}}{\lambda_j^{\rm zf}}} \left( \dfrac{\beta_{jil}}{\beta_{jij}} \right)  s_j [i]}_{\textrm{Signal components}} \\
 &\hspace{4mm}+ \underbrace{\sum_{j=1}^L \sqrt{\dfrac{\rho_{\rm dl}}{\lambda_j^{\rm zf}}} \sum_{k=1}^K \pmb{\epsilon}_{jil}^{\dagger} \pmb{w}_{jkj} s_j [k] + z_{il}}_{\textrm{ Interference + Noise}} \label{eq:down:zf:b}\\
&= \underbrace{\sum_{j=1}^L \theta_{ij} s_j [i]}_{\textrm{Signal components}} + \underbrace{z_{il}^{\prime \prime}}_{\textrm{Additive noise}} , \label{eq:down:zf:c} 
\end{align}
where (a) follows from \eqref{eq:estimate:ratio} and the substitution $\pmb{R}_{jkl} = \beta_{jkl} \pmb{I}_M$ and by noting that $\pmb{w}_{jkj}^{\dagger} \pmb{\hat{g}}_{jmj} = \delta_{mk}$,
and $\lambda_j^{\rm zf}$ is chosen such that $\mathbb{E}[  \pmb{x}_j^{\dag} \pmb{x}_j ] / K = 1$.
Also, $\theta_{ij} := \sqrt{{\rho_{\rm dl}}/{\lambda_j^{\rm zf}}} \left( {\beta_{jil}}/{\beta_{jij}} \right)$, and $z_{il}^{\prime \prime}$ is the additive noise term which is neither Gaussian nor independent of the desired signal components. Nonetheless, $z_{il}^{\prime \prime}$ is zero-mean and uncorrelated from the desired signal components. Therefore, one can apply the worst-case uncorrelated noise technique and obtain the achievable lower bound presented in the following theorem. 
\begin{figure*}[t!]
\normalsize
\setcounter{equation}{17}
\begin{equation}\label{thm:down:zf:1}
I \left(  y_{il} ; \; \pmb{s}_{\omega}  \Big\vert  \pmb{s}_{\Omega \setminus \omega} \right) \geq C \left( \frac{\sum_{j \in \omega} \left( \dfrac{\rho_{\rm dl}}{\lambda_j^{\rm zf}} \right) \left( \dfrac{\beta_{jil}}{\beta_{jij}} \right)^2 }{ \sum_{j=1}^L \left( \dfrac{\rho_{\rm dl}}{\lambda_j^{\rm zf}} \right) \sum_{k=1}^K   \dfrac{\beta_{jil} - \sqrt{\rho_{\rm p}} \beta_{jil} \alpha_{jil}}{(M-K) \sqrt{\rho_{\rm p}} \beta_{jkj} \alpha_{jkj}} + \sum_{j \in \Omega^c} \left( \dfrac{\rho_{\rm dl}}{\lambda_j^{\rm zf}} \right) \left( \dfrac{\beta_{jil}}{\beta_{jij}} \right)^2  + 1 } \right) .
\end{equation} 
\hrulefill
\end{figure*}
\begin{theorem}\label{thm:down:zf} 
Assuming ZF and also Gaussian signaling, i.e., $ \left[ s_1 [i], s_2 [i], ..., s_L [i] \right]^T \sim \mathcal{CN} \left(\pmb{0}, \; \pmb{I}_L  \right)$, for $i \in \lbrace 1, 2, ..., K \rbrace $, the set of lower bounds shown in \eqref{thm:down:zf:1} at the top of the next page can be achieved
for all $\omega \subseteq \Omega$, where $\Omega$ is such that $\lbrace l \rbrace \subseteq \Omega \subseteq \mathcal{L}$. Also, we have $\lambda_j^{\rm zf} = \frac{1}{K (M - K)} \sum_{i=1}^K  \frac{1}{\sqrt{\rho_{\rm p}} \beta_{jij} \alpha_{jij} }$, where $\alpha_{jij} := \frac{\sqrt{\rho_{\textrm{p}}} \beta_{jij}}{1 + \rho_{\textrm{p}}\sum_{l=1}^L \beta_{jil}}$.
\end{theorem}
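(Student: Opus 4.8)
The plan is to specialize the precoder-agnostic bound \eqref{eq:thm:2:1} of Lemma~\ref{thm:2} to ZF precoding under the uncorrelated model $\pmb{R}_{jkl}=\beta_{jkl}\pmb{I}_M$; equivalently, one applies the worst-case uncorrelated noise technique directly to \eqref{eq:down:zf:c}, which is legitimate because $z_{il}^{\prime\prime}$ is zero-mean and uncorrelated with every signal component $s_j[i]$. Either way, it suffices to evaluate three quantities in terms of the channel statistics: the deterministic beamforming gain $\mathbb{E}[\pmb{g}_{jil}^{\dagger}\pmb{w}_{jij}]$, the second moments $\mathbb{E}[|\pmb{g}_{jil}^{\dagger}\pmb{w}_{jkj}|^2]$ for $k=1,\dots,K$, and the normalization constant $\lambda_j^{\rm zf}$; plugging these into \eqref{eq:thm:2:1} and simplifying then yields \eqref{thm:down:zf:1}.

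First I would settle the estimate algebra. From \eqref{eq:6} and \eqref{eq:estimate:ratio} with $\pmb{R}_{jkl}=\beta_{jkl}\pmb{I}_M$ one gets $\pmb{\hat{g}}_{jil}=(\beta_{jil}/\beta_{jij})\pmb{\hat{g}}_{jij}=\alpha_{jil}\pmb{r}_{ji}$, and checks that this is precisely the MMSE estimate of $\pmb{g}_{jil}$ from $\pmb{r}_{ji}$. Hence, by \eqref{eq:dist}, each entry of $\pmb{\hat{g}}_{jmj}$ has variance $\sqrt{\rho_{\rm p}}\beta_{jmj}\alpha_{jmj}$, while the error $\pmb{\epsilon}_{jil}=\pmb{g}_{jil}-\pmb{\hat{g}}_{jil}$ has i.i.d.\ entries of variance $\beta_{jil}-\sqrt{\rho_{\rm p}}\beta_{jil}\alpha_{jil}$ (analogously to \eqref{eq:dist:1}) and, being an MMSE error, is uncorrelated with — hence, by joint Gaussianity, independent of — $\pmb{r}_{ji}$. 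Combined with the ZF identity $\pmb{w}_{jkj}^{\dagger}\pmb{\hat{g}}_{jmj}=\delta_{mk}$, this gives $\pmb{\hat{g}}_{jil}^{\dagger}\pmb{w}_{jkj}=(\beta_{jil}/\beta_{jij})\delta_{ik}$, so $\mathbb{E}[\pmb{g}_{jil}^{\dagger}\pmb{w}_{jij}]=\beta_{jil}/\beta_{jij}$ and the only random contribution to $\pmb{g}_{jil}^{\dagger}\pmb{w}_{jkj}$ is $\pmb{\epsilon}_{jil}^{\dagger}\pmb{w}_{jkj}$. The key structural point is that $\pmb{\epsilon}_{jil}$ is independent of the entire ZF matrix $\pmb{W}_{jj}=\pmb{\hat{G}}_{jj}(\pmb{\hat{G}}_{jj}^{\dagger}\pmb{\hat{G}}_{jj})^{-1}$: it is independent of $\pmb{\hat{g}}_{jij}$ (a function of $\pmb{r}_{ji}$) by the orthogonality just noted, and independent of $\pmb{\hat{g}}_{jmj}$ for $m\neq i$ because those estimates involve only other users' pilot observations. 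Therefore $\mathbb{E}[|\pmb{\epsilon}_{jil}^{\dagger}\pmb{w}_{jkj}|^2]=(\beta_{jil}-\sqrt{\rho_{\rm p}}\beta_{jil}\alpha_{jil})\,\mathbb{E}[\|\pmb{w}_{jkj}\|^2]$, all cross-terms among distinct index pairs vanish in expectation, and $\sum_{k=1}^K\mathbb{E}[|\pmb{g}_{jil}^{\dagger}\pmb{w}_{jkj}|^2]=(\beta_{jil}/\beta_{jij})^2+\sum_{k=1}^K(\beta_{jil}-\sqrt{\rho_{\rm p}}\beta_{jil}\alpha_{jil})\,\mathbb{E}[\|\pmb{w}_{jkj}\|^2]$.

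The one genuinely nontrivial computation, which I expect to be the main obstacle, is $\mathbb{E}[\|\pmb{w}_{jkj}\|^2]=\mathbb{E}\big[[(\pmb{\hat{G}}_{jj}^{\dagger}\pmb{\hat{G}}_{jj})^{-1}]_{kk}\big]$. Since the columns of $\pmb{\hat{G}}_{jj}$ are independent with $\pmb{\hat{g}}_{jmj}\sim\mathcal{CN}(\pmb{0},\sqrt{\rho_{\rm p}}\beta_{jmj}\alpha_{jmj}\pmb{I}_M)$, I would factor $\pmb{\hat{G}}_{jj}=\pmb{U}_j\pmb{D}_j^{1/2}$ with $\pmb{U}_j$ having i.i.d.\ $\mathcal{CN}(0,1)$ entries and $\pmb{D}_j=\mathrm{diag}(\sqrt{\rho_{\rm p}}\beta_{j1j}\alpha_{j1j},\dots,\sqrt{\rho_{\rm p}}\beta_{jKj}\alpha_{jKj})$, so that $[(\pmb{\hat{G}}_{jj}^{\dagger}\pmb{\hat{G}}_{jj})^{-1}]_{kk}=\big(\sqrt{\rho_{\rm p}}\beta_{jkj}\alpha_{jkj}\big)^{-1}[(\pmb{U}_j^{\dagger}\pmb{U}_j)^{-1}]_{kk}$, and then invoke the standard inverse complex-Wishart moment $\mathbb{E}\big[[(\pmb{U}_j^{\dagger}\pmb{U}_j)^{-1}]_{kk}\big]=1/(M-K)$, valid since $M>K$. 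This gives $\mathbb{E}[\|\pmb{w}_{jkj}\|^2]=\frac{1}{(M-K)\sqrt{\rho_{\rm p}}\beta_{jkj}\alpha_{jkj}}$, hence $\mathbb{E}[|\pmb{\epsilon}_{jil}^{\dagger}\pmb{w}_{jkj}|^2]=\frac{\beta_{jil}-\sqrt{\rho_{\rm p}}\beta_{jil}\alpha_{jil}}{(M-K)\sqrt{\rho_{\rm p}}\beta_{jkj}\alpha_{jkj}}$, and, through the power constraint $\mathbb{E}[\pmb{x}_j^{\dagger}\pmb{x}_j]/K=\frac{1}{K\lambda_j^{\rm zf}}\sum_{m=1}^K\mathbb{E}[\|\pmb{w}_{jmj}\|^2]=1$, the claimed $\lambda_j^{\rm zf}=\frac{1}{K(M-K)}\sum_{m=1}^K\frac{1}{\sqrt{\rho_{\rm p}}\beta_{jmj}\alpha_{jmj}}$.

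Finally I would assemble the bound. Substituting $\mathbb{E}[\pmb{g}_{jil}^{\dagger}\pmb{w}_{jij}]=\beta_{jil}/\beta_{jij}$ and the above expression for $\sum_{k}\mathbb{E}[|\pmb{g}_{jil}^{\dagger}\pmb{w}_{jkj}|^2]$ into \eqref{eq:thm:2:1}: the numerator becomes $\sum_{j\in\omega}(\rho_{\rm dl}/\lambda_j^{\rm zf})(\beta_{jil}/\beta_{jij})^2$; in the denominator, the total-received-power term contributes $\sum_{j=1}^L(\rho_{\rm dl}/\lambda_j^{\rm zf})(\beta_{jil}/\beta_{jij})^2$ plus the estimation-error double sum, and subtracting the desired-signal power $\sum_{j\in\Omega}(\rho_{\rm dl}/\lambda_j^{\rm zf})(\beta_{jil}/\beta_{jij})^2$ leaves exactly $\sum_{j\in\Omega^c}(\rho_{\rm dl}/\lambda_j^{\rm zf})(\beta_{jil}/\beta_{jij})^2$, which together with the estimation-error sum and the unit noise term reproduces \eqref{thm:down:zf:1}. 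Beyond this bookkeeping, the only points requiring care are the independence of $\pmb{\epsilon}_{jil}$ and $\pmb{W}_{jj}$, and the citation or self-contained derivation of the diagonal inverse-Wishart moment.
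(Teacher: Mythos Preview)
Your proposal is correct and follows essentially the same route as the paper: compute the deterministic gain $\mathbb{E}[\pmb{g}_{jil}^{\dagger}\pmb{w}_{jij}]=\beta_{jil}/\beta_{jij}$, use independence of $\pmb{\epsilon}_{jil}$ and $\pmb{W}_{jj}$ to reduce $\mathbb{E}[|\pmb{\epsilon}_{jil}^{\dagger}\pmb{w}_{jkj}|^2]$ to $(\beta_{jil}-\sqrt{\rho_{\rm p}}\beta_{jil}\alpha_{jil})\,\mathbb{E}[\|\pmb{w}_{jkj}\|^2]$, invoke the inverse-Wishart diagonal moment $\mathbb{E}[\|\pmb{w}_{jkj}\|^2]=1/((M-K)\sqrt{\rho_{\rm p}}\beta_{jkj}\alpha_{jkj})$, and then apply the worst-case uncorrelated noise lemma. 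The paper simply cites \cite{tulino2004random} for the Wishart moment and states $\lambda_j^{\rm zf}$ without derivation, whereas you supply the explicit factorization $\pmb{\hat{G}}_{jj}=\pmb{U}_j\pmb{D}_j^{1/2}$ and the power-constraint computation; these are welcome elaborations but not a different argument.
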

\begin{proof}
See Appendix~A-2.
\end{proof}
Note that the numerator on the right side of \eqref{thm:down:zf:1} is the received power of the subset $\omega \subseteq \Omega$ of the desired signal components, while in the denominator the first term represents the variance of the interference caused by the estimation error (i.e., the second term in \eqref{eq:down:zf:b}), the second term is the received power of the signal components not included in the set of desired signals (i.e., $s_j[i], j \in \Omega^{c}$) and the third term is the noise power. Further note that in the case of SD, the substitution $\Omega = \mathcal{L}$ is used in this lower bound.
\section{Decoding Interference Partially: Rate-splitting}
In the following, we first briefly discuss how partial interference decoding based on HK can be applied to a two-cell massive MIMO system. Then, motivated by this scheme, we propose a generalization that can be applied to more than two cells (i.e., an $L$-user IC where $L$ is arbitrary). Recall that there are $K$ non-interfering ICs in the network and the analysis is thus the same with respect to the index $i \in \left\lbrace 1, ..., K  \right\rbrace$ of the users sharing pilot sequence $\pmb{\psi}_i$. As such, for the rest of this paper, to simplify notation, the index $i$ will be suppressed.
\subsection{Two Cells: HK}\label{sec:hk}
Consider the two-user IC of \eqref{eq:down:2} associated with the two users sharing the same pilot sequence in each cell in the downlink of a two-cell massive MIMO system (i.e., $L=2$). To achieve the HK inner bound for this IC, one can follow the (simple) scheme of \cite[Section~6.5]{el2011network}. As shown in \cite{chong2008han}, this scheme achieves the same inner bound as the original HK scheme \cite{han1981new}. The HK scheme proceeds as follows: 

\textbf{Encoding:} Adopting the scheme of \cite{el2011network} for the Gaussian case, at BS $l, l=1, 2$, message $m_{l}$ is first partitioned into two independent parts $m_{l}^{(a)} \in [ 1 : 2^{n R_{l}^{(a)}} ]$ and $m_{l}^{(b)} \in [ 1 : 2^{n R_{l}^{ (b)}} ]$ such that $R_{l} = R_{l}^{(a)} + R_{l}^{(b)}$. Then, part $m_{l}^{(b)}$ is encoded into codeword $\pmb{s}_l^{(b)}(m_{l}^{(b)})$ of length $n$ (known as the ``cloud center" which carries ``coarse information"), while part $m_{l}^{(a)}$ is encoded into another codeword $\pmb{s}_l^{(a)}(m_{l}^{(a)}, m_{l}^{(b)})$ of length $n$; finally the latter codeword is superimposed (or layered) on the former to produce a single codeword for transmission $\pmb{s}_l(m_{l}^{(a)}, m_{l}^{(b)}) = \pmb{s}_l^{(b)}(m_{l}^{(b)}) + \pmb{s}_l^{(a)}(m_{l}^{(a)},m_{l}^{(b)})$ (known as the ``satellite codeword" which carries the full information). The total transmit power budget at each BS is split into two fixed parts according to the power splitting coefficient $\mu_{l}  \in [0, 1],\; l=1, 2$: the fraction $\mu_{l}$ of the budget is allocated to the ``outer'' layer $\pmb{s}_l^{(a)},\; l=1, 2$, while the fraction  $(1- \mu_{l})$ of the budget is allotted to the ``inner'' layer $\pmb{s}_l^{(b)},\; l=1, 2$. Finally, $\pmb{s}_l^{(a)}$ and $\pmb{s}_l^{(b)}$ are chosen to be i.i.d., zero-mean circularly symmetric complex Gaussian, with powers determined by $\mu_{l}$ for $l=1, 2$.

\textbf{Decoding:} The user of cell $l, l=1, 2,$ decodes both the inner and outer layers of the intended message $(m_{l}^{(a)},m_{l}^{(b)})$ uniquely, and tries to \textit{non-uniquely} decode the inner layer of the interfering message $m_{j}^{(b)}, j \neq l$, while treating the outer layer $m_{j}^{(a)}, j \neq l,$ as noise. 

It can be verified that the regions, TIN, SD and SND are obtained as special cases of HK, when specific choices of $\mu_{1}$ and $\mu_{2}$ are picked at each BS, i.e., $\lbrace 0 \rbrace$ or $\lbrace 1 \rbrace$. The performance of HK for a two-cell massive MIMO system was studied in \cite{bsc2021}, where it was shown that by numerically optimizing $\mu_1$ and $\mu_2$, this partial decoding scheme significantly outperforms TIN, SD and SND for a practical number of antennas. 
\subsection{Beyond Two Cells: RS}\label{sec:beyond:two}
When going beyond two cells, one possible generalization of the HK scheme can be obtained by considering one power splitting coefficient for each user in the corresponding IC, i.e., $L$ different coefficients $\mu_{l} \in \left[0, 1  \right], l=1, ..., L,$ for the $L$-user IC of Fig.~\ref{downlink:ic}. However, taking the union over the combination of all such power splitting strategies seems infeasible, especially for networks with a large number of cells. This motivates the need for a more feasible generalization and the use of a much simpler power splitting strategy. In the following, we propose one possible application of partial decoding to more than two cells that uses only one common power splitting coefficient per IC. Furthermore, as opposed to HK, each receiver non-uniquely decodes \textit{both} layers of all pilot contamination interference terms. We show that by doing so, the proposed scheme can outperform TIN, SD and SND for a practical number of antennas. 

\textbf{Encoding:} Encoding is similar to the case of a two-cell system in Section~\ref{sec:hk} that uses superposition coding, except that now only one common power splitting coefficient $\mu$ is utilized by all users of the IC in Fig.~\ref{downlink:ic}. In particular, message $m_{l}, l=1, ..., L,$ is first partitioned into two independent parts $m_{l}^{(a)} \in [ 1 : 2^{n R_{l}^{ (a)}} ]$ and $m_{l}^{(b)} \in [ 1 : 2^{n R_{l}^{ (b)}} ]$ such that $R_{l} = R_{l}^{(a)} + R_{l}^{ (b)}, l=1, ..., L$. Next, part $m_{l}^{(b)}$ is encoded into codeword $\pmb{s}_l^{(b)}(m_{l}^{(b)}), l=1, ..., L,$ of length $n$, while part $m_{l}^{(a)}$ is encoded into another codeword $\pmb{s}_l^{(a)} (m_{l}^{(a)}, m_{l}^{(b)}), l=1, ..., L,$ of length $n$. Finally, the latter codeword is superimposed on the former to produce the satellite codeword for transmission $\pmb{s}_l (m_{l}^{(a)}, m_{l}^{(b)}) = \pmb{s}_l^{(b)} (m_{l}^{(b)}) + \pmb{s}_l^{(a)} (m_{l}^{(a)},m_{l}^{(b)}), l=1, ..., L$. Moreover, the total transmit power budget at all BSs is split into two fixed parts according to the power splitting coefficient $\mu  \in [0, 1]$: the fraction $\mu$ of the budget is allocated to the ``outer'' layer $\pmb{s}_l^{(a)} ,\; l=1, ..., L$, while the fraction  $(1- \mu)$ of the budget is allotted to the ``inner'' layer $\pmb{s}_l^{(b)} ,\; l=1, ..., L$. Lastly, $\pmb{s}_l^{(a)} $ and $\pmb{s}_l^{(b)} $ are chosen to be i.i.d., zero-mean circularly symmetric complex Gaussian, with powers determined by $\mu$.

\textbf{Decoding:} In the decoding stage, the SND scheme is applied to \textit{non-uniquely} decode both layers of all pilot contamination interference terms. Specifically, the decoder at receiver $l$ (i.e., user of cell $l$) uniquely decodes both the inner and outer layers of its own message $(m_{l}^{(a)},m_{l}^{(b)})$, and \textit{non-uniquely} decodes both layers of all interfering messages $\lbrace m_{j}^{(a)},m_{j}^{(b)} \rbrace, j \in \mathcal{L} \setminus \lbrace l \rbrace$. This is as opposed to HK, where only the inner layer of the interfering signal is decoded non-uniquely while treating the outer layer as noise.

Note that while full information $(m_{l}^{(a)},m_{l}^{(b)})$ is carried in the satellite codeword $\pmb{s}_l$, the inner layer $\pmb{s}_l^{(b)}$, only carries coarse information $m_{l}^{(b)}$. Therefore, due to the code construction, the inner layer can be decoded without decoding $m_{l}^{(a)}$ in the outer layer, whereas the outer layer can be decoded either jointly with the inner layer, i.e., $(m_{l}^{(a)},m_{l}^{(b)})$, or only after $m_{l}^{(b)}$ is decoded first in the inner layer.

In Appendix~A-3, a detailed derivation of the achievable region for a two-cell system is provided. Below, the general achievable region for $L \geq 2$ is presented. The achievability proof follows the same steps as that in Appendix~A-3, but is significantly more tedious. To characterize an achievable rate region for RS, we first need to define the following sets:
\begin{align}
\nonumber \mathcal{S}_l^{\rm RS} &:=  \mathcal{A}_1^{\rm RS} \times ... \times \mathcal{A}_{l-1}^{\rm RS} \times \left\lbrace \left\lbrace m_{l}^{(a)},m_{l}^{(b)} \right\rbrace  \right\rbrace \\ 
&\hspace{5mm}\times \mathcal{A}_{l+1}^{\rm RS} \times ... \times \mathcal{A}_L^{\rm RS} , \quad l=1, ..., L, \label{eq:sl}
\end{align} 
where $ \times $ denotes the Cartesian product and $\mathcal{A}_j^{\rm RS}$ is given by 
\begin{equation}\label{eq:Aj}
\mathcal{A}_j^{\rm RS} :=  \left\lbrace  \emptyset,  \left\lbrace m_{j}^{(b)} \right\rbrace, \left\lbrace m_{j}^{(a)}, m_{j}^{(b)} \right\rbrace \right\rbrace. 
\end{equation}
Furthermore, denote the achievable region for the rate vector $[ R_1^{(a)}, R_1^{(b)}, ..., R_L^{(a)}, R_L^{(b)} ]^T$ obtained by the proposed RS scheme at receiver $l$ and the network-wide achievable region by $\mathcal{R}_{l}^{\rm RS}$ and $\mathscr{R}^{\rm RS}$, respectively. Then, following the discussion in Appendix~A-3, we have 
\begin{equation}\label{ch4:eq:23}
\mathscr{R}^{\rm RS} = \bigcap_{l=1}^L \mathcal{R}_{l}^{\rm RS},
\end{equation} 
where
\begin{equation}\label{ch4:eq:24}
\mathcal{R}_{l}^{\rm RS} = \bigcup_{ \Omega_l \in \mathcal{S}_l^{\rm RS}  } \mathcal{R}_{\textrm{MAC}(\Omega_l, l ) }^{\rm RS}, 
\end{equation}
and $\mathcal{R}_{\textrm{MAC}(\Omega_l, l ) }^{\rm RS}$ is a modified MAC region (as will be explained in the following) obtained from jointly decoding the messages included in the set $\Omega_l$, where $\Omega_l$ is an element of $\mathcal{S}_l^{\rm RS}$ defined in \eqref{eq:sl}, and thus $\lbrace m_{l}^{(a)},m_{l}^{(b)} \rbrace \subseteq \Omega_l$. Also, note that messages not included in the set $\Omega_l$, are treated as noise in the region $\mathcal{R}_{\textrm{MAC}(\Omega_l, l ) }^{\rm RS}$ at receiver $l$. 

One should note that the $\vert \Omega_l \vert $-user MAC of $\mathcal{R}_{\textrm{MAC}(\Omega_l, l ) }^{\rm RS}$ has less than $2^{\vert \Omega_l \vert} -1$ constraints ($\vert \Omega_l \vert$ is the cardinality of the set $\Omega_l$), as some of the constraints will be removed because of the following. As shown in Appendix~A-3, if $\Omega_l$ contains messages of both layers $(m_{j}^{(a)},m_{j}^{(b)}),$ for some $j$, then those constraints that contain $R_{j}^{ (b)}$ but not $R_{j}^{(a)}$ will be removed from the rate region. In Appendix~A-3, for the case of a two-cell system we have explicitly identified these constraints at each receiver. Below, we provide the example of a three-cell system and discuss its achievable rate region with RS.

\textbf{Example ($L=3$):} The message of each transmitter is first partitioned into two independent parts: $m_{1}^{(a)}$ and $m_{1}^{(b)}$ at BS $1$, $m_{2}^{(a)}$ and $m_{2}^{(b)}$ at BS $2$, and $m_{3}^{(a)}$ and $m_{3}^{(b)}$ at BS $3$. Then, by applying superposition coding and non-unique decoding, the set $\mathcal{S}_l^{\rm RS}, l=1,2 ,3$, at each BS is given by:
\begin{align}
\nonumber \mathcal{S}_1^{\rm RS}    &= \left\lbrace \left\lbrace m_{1}^{(a)}, m_{1}^{(b)} \right\rbrace  \right\rbrace \times \left\lbrace \emptyset,  \left\lbrace m_{2}^{(b)} \right\rbrace, \left\lbrace m_{2}^{(a)}, m_{2}^{(b)} \right\rbrace  \right\rbrace  \\
&\hspace{5mm}\times   \left\lbrace \emptyset,  \left\lbrace m_{3}^{(b)} \right\rbrace, \left\lbrace m_{3}^{(a)}, m_{3}^{(b)} \right\rbrace  \right\rbrace  \label{rs:three:cell} \\
\nonumber \mathcal{S}_2^{\rm RS}    &=   \left\lbrace \emptyset,  \left\lbrace m_{1}^{(b)} \right\rbrace, \left\lbrace m_{1}^{(a)}, m_{1}^{(b)} \right\rbrace  \right\rbrace  \times \left\lbrace \left\lbrace m_{2}^{(a)}, m_{2}^{(b)} \right\rbrace \right\rbrace \\
&\hspace{5mm}\times  \left\lbrace \emptyset,  \left\lbrace m_{3}^{(b)} \right\rbrace, \left\lbrace m_{3}^{(a)}, m_{3}^{(b)} \right\rbrace  \right\rbrace   \\
\nonumber \mathcal{S}_3^{\rm RS}    &=  \left\lbrace \emptyset,  \left\lbrace m_{1}^{(b)} \right\rbrace, \left\lbrace m_{1}^{(a)}, m_{1}^{(b)} \right\rbrace  \right\rbrace  \\
&\hspace{5mm}\times   \left\lbrace \emptyset,  \left\lbrace m_{2}^{(b)} \right\rbrace, \left\lbrace m_{2}^{(a)}, m_{2}^{(b)} \right\rbrace  \right\rbrace  \times \left\lbrace \left\lbrace m_{3}^{(a)}, m_{3}^{(b)} \right\rbrace \right\rbrace.  \label{rs:three:cell:1} 
\end{align}
Therefore, the achievable region at each BS using the RS scheme with non-unique decoding is obtained by taking the union of $9$ modified MAC regions. For instance, at BS $1$, one needs to take the union of regions $\mathcal{R}_{\textrm{MAC}(\Omega_1, 1  ) }^{\rm RS}$ over the following $9$ elements of $\mathcal{S}_1^{\rm RS}$, denoted by $\Omega_1^{(j)}, j=1, ..., 9,$ 
\begin{align}
\nonumber \Omega_1^{(1)} &= \left\lbrace \left\lbrace  m_{1}^{(a)}, m_{1}^{(b)} \right\rbrace \right\rbrace, \\ 
\nonumber \Omega_1^{(2)} &= \left\lbrace  \left\lbrace m_{1}^{(a)}, m_{1}^{(b)} \right\rbrace , \left\lbrace m_{2}^{(b)}  \right\rbrace \right\rbrace, \\
\nonumber \Omega_1^{(3)} &= \left\lbrace \left\lbrace m_{1}^{(a)}, m_{1}^{(b)} \right\rbrace , \left\lbrace m_{3}^{(b)} \right\rbrace \right\rbrace , \\
\nonumber \Omega_1^{(4)} &= \left\lbrace \left\lbrace m_{1}^{(a)}, m_{1}^{(b)} \right\rbrace , \left\lbrace m_{2}^{(a)}, m_{2}^{(b)} \right\rbrace \right\rbrace, \\
\nonumber \Omega_1^{(5)} &= \left\lbrace  \left\lbrace m_{1}^{(a)}, m_{1}^{(b)} \right\rbrace , \left\lbrace m_{3}^{(a)}, m_{3}^{(b)} \right\rbrace \right\rbrace, \\
\nonumber \Omega_1^{(6)} &= \left\lbrace \left\lbrace m_{1}^{(a)}, m_{1}^{(b)} \right\rbrace , \left\lbrace m_{2}^{(b)} \right\rbrace , \left\lbrace m_{3}^{(b)} \right\rbrace \right\rbrace, \\
\nonumber \Omega_1^{(7)} &= \left\lbrace  \left\lbrace m_{1}^{(a)}, m_{1}^{(b)} \right\rbrace, \left\lbrace m_{2}^{(a)}, m_{2}^{(b)} \right\rbrace, \left\lbrace m_{3}^{(b)} \right\rbrace \right\rbrace, \\
\nonumber \Omega_1^{(8)} &= \left\lbrace \left\lbrace m_{1}^{(a)}, m_{1}^{(b)} \right\rbrace , \left\lbrace m_{2}^{(b)} \right\rbrace , \left\lbrace m_{3}^{(a)}, m_{3}^{(b)} \right\rbrace \right\rbrace, \\ 
\nonumber \Omega_1^{(9)} &= \left\lbrace \left\lbrace m_{1}^{(a)}, m_{1}^{(b)} \right\rbrace , \left\lbrace m_{2}^{(a)}, m_{2}^{(b)} \right\rbrace , \left\lbrace m_{3}^{(a)}, m_{3}^{(b)} \right\rbrace \right\rbrace.  
\end{align} 
By swapping the appropriate indices, one can similarly obtain $9$ possible choices of $\Omega_2$ and $\Omega_3$, at BSs $2$ and $3$, respectively. By evaluating only specific choices of $\Omega_1$ and their counterparts at receivers $2$ and $3$, it is readily verified that regions TIN, SD and SND are special cases of RS. For instance, if $\Omega_1^{(1)}$ is picked by the decoder at receiver $1$, while its counterparts are picked at receivers $2$ and $3$, then the RS scheme will be equivalent to TIN. However, one should note that neither TIN nor SD/SND can provide the decoding flexibilities enabled by $\Omega_1^{(2)}, \Omega_1^{(3)}, \Omega_1^{(6)}, \Omega_1^{(7)}, \Omega_1^{(8)}$. Therefore, by taking the union over all possible choices of $\mu$, the proposed RS scheme enlarges the region achieved by SND, and can thus outperform TIN, SD and SND.
\begin{remark}\label{rs:snd}
Note that if we choose $R_l^{(a)} = 0, l=1, ..., L,$ in the code construction, then $m_l^{(a)}=1$, and the codewords are $\pmb{s}_l ( 1, m_{l}^{(b)}) = \pmb{s}_l^{(b)} (m_{l}^{(b)}) + \pmb{s}_l^{(a)} ( 1 ,m_{l}^{(b)}) := \pmb{s}_l (m_l^{(b)}), l=1, ..., L,$ and these are all i.i.d Gaussian. Since in cell $l$, the messages $m_j^{(b)}, j \neq l$ (i.e., messages from the other cells) are decoded non-uniquely, and also $m_l^{(a)}$ has only one possible value (and is thus trivial to decode), then for all $\mu \in [0, 1]$, $\mathcal{S}_l^{\rm RS}$ in \eqref{eq:sl} effectively becomes the set of feasible message combinations of the form $\lbrace \pmb{m}_{\Omega}^{(b)} \rbrace_{\lbrace l \rbrace \subseteq \Omega \subseteq \mathcal{L} }$, where $\pmb{m}_{\Omega}^{(b)}$ is the vector of size $\vert \Omega \vert$ with entries $m_j^{(b)}, j \in \Omega$. Also, as pointed out in \cite[Section~2]{bandemer2015optimal}, the region given by the resulting set of constraints is equivalent to the SND region, i.e., $\left[ R_1, ..., R_L  \right] \in \mathscr{R}^{\rm SND} \Longleftrightarrow \left[0, R_1, ..., 0, R_L  \right] \in \mathscr{R}^{\rm RS}, \; \forall \mu \in [0, 1]$.
\end{remark}
\section{Maximum Symmetric Rate Allocation}
We study the maximum symmetric rate allocation problem, and compare the performance of RS with TIN, SD and SND based on the maximum symmetric rate they can offer. Computing the maximum symmetric rate over the SD region has been discussed in \cite{8913624} using the properties of convex polytopes. In the case of SND, however, one should first note that the achievable region at each receiver is the union of a finite number of MAC regions (cf. \eqref{eq:24}). Therefore, to find the maximum symmetric rate of SND at each receiver, one can calculate the maximum symmetric rate over each of these MAC regions, and then pick the largest of these quantities. 

In the case of RS, we first fix $\mu$ and solve the following problem
\begin{align}\label{chap4:eq31}
[\mathcal{P}1] \quad \quad \max& \quad \min_l \quad R_l^{ (a)} + R_l^{ (b)} \\
\quad\textrm{subject to}& \quad  \left[ R_1^{ (a)}, R_1^{ (b)}, ..., R_L^{ (a)}, R_L^{ (b)} \right] \in \mathscr{R}^{\rm RS}, \label{chap4:eq32} \\
&\quad \;\;\;  R_l^{ (a)}, R_l^{ (b)} \geq 0, \quad \forall l \in \mathcal{L}.
\end{align} 
Note that the region $\mathcal{R}_{\textrm{MAC}(\Omega_l, l ) }^{\rm RS}$ in \eqref{ch4:eq:24} is in the form of a convex polytope and the intersection of a finite number of these convex polytopes yields another convex polytope. Therefore, by distributing the intersection in \eqref{ch4:eq:23} over the union in \eqref{ch4:eq:24} (using the distributive law) the network-wide region $\mathscr{R}^{\rm RS}$ can be re-written as the union of a finite number of convex polytopes, i.e., $\mathscr{R}^{\rm RS} = \bigcup_n \tilde{\mathscr{R}}_n^{\rm RS}, \; n \in \mathcal{I}^{\rm RS}:=\lbrace 1, 2, ..., N^{\rm RS}  \rbrace $, where $N^{\rm RS}$ is the total number of these convex polytopes, and each $n$ corresponds to a unique choice of $\left( \Omega_1, \Omega_2, ..., \Omega_L \right) \in \mathcal{S}_1^{\rm RS} \times ... \times \mathcal{S}_L^{\rm RS}$. Note that solving the maximum symmetric rate problem over one of these convex polytopes can be formulated as an LP. Specifically, we first define the two $(2L + 1) \times 1$ (where $L \geq 2$) column vectors $\pmb{x} := [ R_1^{ (a)}, R_1^{ (b)}, ..., R_L^{ (a)}, R_L^{ (b)}, t ]^T$ and $\pmb{c} := \left[ 0, ..., 0, 1 \right]^T$. Then, the convex polytope $\tilde{\mathscr{R}}_n^{\rm RS}$ can be written in matrix form as $\tilde{\pmb{A}_n} \pmb{x} \leq \tilde{\pmb{b}}_n(\mu)$, where $\leq$ denotes element-wise inequality between two vectors, and matrix $\tilde{\pmb{A}_n}$ and vector $\tilde{\pmb{b}}_n(\mu)$ are constructed as follows. As explained below \eqref{ch4:eq:24}, with each choice of $\Omega_l$, a modified MAC region $\mathcal{R}_{\textrm{MAC}(\Omega_l, l ) }^{\rm RS}$ is obtained, which corresponds to a system of linear inequalities, i.e., $\pmb{A}_{\Omega_l} \pmb{x} \leq \pmb{{b}}_{\Omega_l} (\mu)$. For instance, in the case of $L=2$, $|{\cal S}_1^{\rm RS}| = 3$ and therefore there are 3 possible modified MAC regions at receiver 1, which are given by \eqref{eq:app:rs:5}-\eqref{app:rs:5:2}, \eqref{eq:app:rs:9}-\eqref{app:rs:8:5} and \eqref{eq:app:rs:12}-\eqref{app:rs:12:11} in Appendix~A-3. $\pmb{\tilde{A}}_n$ and $\pmb{\tilde{b}}_n(\mu)$ are then obtained by stacking
$\pmb{A}_{\Omega_l}, l = 1, \ldots, L,$ and $\pmb{{b}}_{\Omega_l}(\mu), l = 1, \ldots, L$, respectively, i.e.,
\begin{align}
\pmb{\tilde{A}}_n
= \left[
 \pmb{A}_{\Omega_1}^T , \hdots , \pmb{A}_{\Omega_L}^T 
\right]^T ,\;
\pmb{\tilde{b}}_{n}(\mu)
= \left[
 \pmb{{b}}_{\Omega_1}^T (\mu) , \hdots , \pmb{{b}}_{\Omega_L}^T (\mu) 
\right]^T. 
\end{align}
In addition, we have the inequality constraints $- R_{l}^{ (a)} - R_{l}^{ (b)} + t \leq 0, l=1, ..., L$, that can be written in matrix form as $\pmb{D} \pmb{x} \leq \pmb{0}$, where $\pmb{D}$ is of size $L \times (2L+1)$ with the last entry of each row always being $1$, i.e.,
\begin{equation}\label{eq:1}
\pmb{D} = 
\begin{pmatrix}
-1 & -1 & 0 & 0 & 0 & \cdots & 0 & 1 \\
0 & 0 & -1 & -1 & 0 & \cdots & 0 & 1 \\
\vdots  & & &   \ddots & \ddots & &  & \vdots \\
0 & \cdots & & & 0 & -1 & -1 & 1 
\end{pmatrix}.
\end{equation}
Then, the equivalent optimization problem is obtained as follows
\begin{align}\label{chap4:eq33}
[\mathcal{P}1^{\prime}] \quad\quad &\max_{n \in \mathcal{I}^{\rm RS}} \quad\quad\quad\quad \max_{ \pmb{x}} \quad  \pmb{c}^T \pmb{x} \\
&\textrm{subject to} \quad \quad \; \;\; \tilde{\pmb{A}_n} \pmb{x} \leq \tilde{\pmb{b}}_n(\mu) \label{chap4:eq33:1} \\
&\hspace{24mm} \pmb{D} \pmb{x} \leq \pmb{0}, \quad \pmb{x} \geq \pmb{0},  
\end{align}
where the inner problem for a fixed $n$ is an LP, and the outer maximization finds the index $n$ of the polytope $\tilde{\mathscr{R}}_n^{\rm RS}$ that gives rise to the best symmetric rate for a fixed $\mu$. Denote the optimal value of $[\mathcal{P}1^{\prime}]$ by $t^{\ast} (\mu)$. Noting that the overall region is obtained by taking the union over the combination of all possible power splitting strategies, the optimal solution to the symmetric rate problem is found as 
\begin{align}\label{chap4:eq35}
 &\max_{0 \leq \mu \leq 1} \quad  t^{\ast}(\mu). 
\end{align}   
For networks with large number of cells (e.g., $L=7$), searching over all sub-regions (i.e., convex polytopes $\tilde{\mathscr{R}}_n^{\rm RS}$) to find the best symmetric rate may not be computationally feasible. In the following, we introduce a subset of these sub-regions that provides an achievable lower bound to the performance of RS, and numerically show that this subset still offers a significant gain over TIN, SD and SND.

First, define a subset of $\mathcal{S}_l^{\rm RS}$ as follows 
\begin{align}
\nonumber \mathcal{S}_l^{\rm Sub} &:= \left\lbrace  \left\lbrace m_l^{\rm (a)}, m_l^{\rm (b)}  \right\rbrace  \right\rbrace \\ 
\nonumber &\hspace{6mm}\times \Big\lbrace \left\lbrace  m_1^{\rm (b)} \right\rbrace, ..., \left\lbrace m_{l-1}^{\rm (b)} \right\rbrace, \left\lbrace m_{l+1}^{\rm (b)} \right\rbrace, ..., \left\lbrace m_L^{\rm (b)} \right\rbrace, \\ 
&\hspace{13mm}\left\lbrace m_1^{\rm (b)}, ..., m_{l-1}^{\rm (b)}, m_{l+1}^{\rm (b)}, ..., m_L^{\rm (b)} \right\rbrace  \Big\rbrace, \label{eq:subset}
\end{align} 
which gives rise to $\mathscr{R}^{\rm Sub} \subset \mathscr{R}^{\rm RS}$ as follows
\begin{equation}\label{ch4:eq:sub1}
\mathscr{R}^{\rm Sub} = \bigcap_{l=1}^L \mathcal{R}_{l}^{\rm Sub},
\end{equation}
where 
\begin{equation}\label{ch4:eq:sub2}
\mathcal{R}_{l}^{\rm Sub} = \bigcup_{ \Omega_l \in \mathcal{S}_l^{\rm Sub}  } \mathcal{R}_{\textrm{MAC}(\Omega_l, l ) }^{\rm RS}.
\end{equation}
This sub-region $\mathscr{R}^{\rm Sub}$ of $\mathscr{R}^{\rm RS}$ can be represented as the union of a finite number of convex polytopes, i.e., $\mathscr{R}^{\rm Sub} = \bigcup_{n} \mathscr{\tilde{R}}_n^{\rm RS}, \; n \in \mathcal{I}^{\rm Sub}$, where $\mathcal{I}^{\rm Sub} \subset \mathcal{I}^{\rm RS}$ since $\mathcal{S}_l^{\rm Sub} \subset \mathcal{S}_l^{\rm RS}$. Further define the function $g (\mathcal{I}) := \max_{0 \leq \mu \leq 1} \; f \left( \mu, \mathcal{I} \right) $, where $f \left( \mu, \mathcal{I} \right)$ is given by
\begin{align}\label{ch4:eq42:11}
f \left( \mu, \mathcal{I} \right) := \quad &\max_{n \in \mathcal{I}} \quad\quad\quad\quad \max_{ \pmb{x}} \quad  \pmb{c}^T \pmb{x} \\
&\textrm{subject to} \quad \quad \;  \tilde{\pmb{A}_n} \pmb{x} \leq \tilde{\pmb{b}}_n(\mu) \label{chap4:eq42:1} \\
&\hspace{22mm} \pmb{D} \pmb{x} \leq \pmb{0}, \quad \pmb{x} \geq \pmb{0}. \label{ch4:eq42:20} 
\end{align}
For instance, the maximum symmetric rate problem over the RS region in \eqref{chap4:eq35} can be written as $g (\mathcal{I^{\rm RS}}) = \max_{0 \leq \mu \leq 1} \; f \left( \mu, \mathcal{I^{\rm RS}} \right) $. Since $\mathcal{I}^{\rm Sub} \subset \mathcal{I}^{\rm RS}$, we have $f \left( \mu, \mathcal{I^{\rm RS}} \right) \geq  f \left( \mu, \mathcal{I^{\rm Sub}} \right)$, and $f \left( \mu, \mathcal{I^{\rm Sub}} \right)$ is achievable by RS for all $\mu \in [0,\; 1]$. In addition, from Remark~\ref{rs:snd}, for all $\mu \in [0,\; 1]$, the SND region is a special case of the RS region. It then follows that $f \left( \mu, \mathcal{I^{\rm RS}} \right) \geq t^{\rm SND, \ast}$, where $t^{\rm SND, \ast}$ is the maximum symmetric rate over the SND region (also achievable by RS). Consequently, we have
\begin{align}
g (\mathcal{I^{\rm RS}}) \; &\geq \;  \max_{ 0 \leq \mu \leq 1} \;\; \max \left\lbrace   \; t^{\rm SND, \ast} , \;  \; f \left( \mu, \mathcal{I^{\rm Sub}} \right)   \right\rbrace \label{ch4:eq50} \\
&= \; \max \; \left\lbrace t^{\rm SND, \ast}, \; g (\mathcal{I^{\rm Sub}})  \right\rbrace. \label{ch4:eq51}  
\end{align}Therefore, by computing $\max \left\lbrace t^{\rm SND, \ast} , g \left( \mathcal{I}^{\textrm{Sub}} \right) \right\rbrace$, an achievable lower bound to the true performance of the RS scheme is obtained.
\begin{remark}
For comparison, consider the case of a more complicated power splitting scheme as briefly mentioned in Section~\ref{sec:beyond:two}, where each of the $L$ users is allowed to have a different splitting coefficient, $\mu_l$, so each user can better fine tune its allotted inner and outer layer power for rate optimization. While this scheme leads to additional degrees of freedom  and can thereby potentially provide better performance, the search region to compute $g (\mathcal{I^{\rm Sub}}) $ in \eqref{ch4:eq51} is now $(\mu_1, \ldots, \mu_l) \in [0,1]^L$ instead of $\mu \in [0,1]$. If each interval $[0,1]$ is divided into $N$ steps, the search space for $(\mu_1, \ldots, \mu_l)$ is then considerably larger than that of $\mu$, even for moderate $N$ and $L$.
\end{remark}

We will see in the next section that searching over $\mu \in [0,\; 1]$ in \eqref{ch4:eq51} is not necessarily needed. Specifically, one can skip numerically optimizing $\mu$ and instead use a pre-computed average value of the optimized splitting coefficients obtained in different random realizations with negligible performance loss; hence reducing the optimization search space. 

We now briefly discuss the computational complexity of finding the maximum symmetric rate for SD, SND and RS by counting the number of LPs required to be solved in each case\footnote{In the case of TIN, finding the maximum symmetric rate is much simpler as 
this is given by the minimum of $L$ single-user rate constraints.}. For SD, we have one MAC region achieved at each receiver of the $L$-user IC. Thus, the maximum symmetric rate can be found by solving $L$ LPs, one at each receiver. In the case of SND, from \eqref{eq:24}, it can be seen that the achievable region at each receiver is the union of $2^{L-1}$ MAC regions, and over each of them the maximum symmetric rate problem is a standard LP. Therefore, across all $L$ receivers, we need to solve $L \times 2^{L-1}$ LPs. For RS, in addition to finding the maximum symmetric rate of SND, $t^{\rm SND, \ast}$, one also needs to compute $g({\cal I}^{\rm Sub}) = \max_\mu f(\mu, \cal{I}^{\rm Sub})$  in \eqref{ch4:eq51}. Due to \eqref{ch4:eq42:11}-\eqref{ch4:eq42:20}, for a given $\mu$, solving for $f(\mu, {\cal I}^{\rm Sub})$ requires solving $|{\cal I}^{\rm Sub}|=L^L$ LPs, one for each $n \in {\cal I}^{\rm Sub}$. If the interval $[0,1]$ is divided into $N$ steps to perform the line search for $\mu$, then computing $g(\cal{I}^{\rm Sub})$ requires solving $N\times L^L$ LPs. In the case that a pre-computed average value of $\mu$ is used, only $L^L$ LPs are solved. We also emphasize the fact that since these sub-regions and their corresponding LPs do not depend on each other, the LP problems could be solved in parallel.
\section{Simulation Results}\label{sec:simulation}
\sloppy To compare the performance of the different schemes, TIN, SD, SND and RS, with maximum symmetric SE (in units of bits/sec/Hz) allocation, we simulate the downlink of a seven-cell massive MIMO system (i.e., $L=7$). In particular, we consider seven hexagons with wrap around topology where the cell radius is $400$ m and one BS is located at the center of each cell. Also, $K=15$ users are uniformly distributed at random within the area of each cell, but at least $35$ m away from the cell center. Moreover, we take the average of the maximum symmetric SEs over $150$ random realizations of user locations. The BS transmit power is taken to be $40$ W (46 dBm), and a 3-dimensional distance-based path-loss model adopted from \cite{pathloss} is used to model large-scale fading coefficient, $\beta_{jkl}$:
\begin{align}
\nonumber \left[ \beta_{jkl} \right]_{\rm dB} &= -13.54 -39.08 \log_{10} \left( d_{jkl}^{3D}  \right) \\
 &\hspace{4mm}- 20 \log_{10} \left( f_c \right) + 0.6 \left(  h_{UT} - 1.5 \right),\label{pathloss}
\end{align}
where $d_{jkl}^{3D}$ is the 3D distance (in meters) from user $k$ in cell $l$ to BS $j$, the carrier frequency is $f_c = 3.5$ GHz, $h_{UT}$ is the user height which is taken to be $1.5$ m, while the BS height is $25$ m. Also, the noise variance is assumed to be $-101$ dBm. Note that while the effects of shadowing are omitted in \eqref{pathloss}, we will investigate its impact on system performance separately at the end of this section by adding an extra term associated with log-normal shadowing to \eqref{pathloss}. Below, we illustrate the two cases of a spatially correlated Rayleigh fading channel and an uncorrelated Rayleigh fading channel separately.
\subsection{Spatially Correlated Channel}
We now study the downlink performance of ZF when a spatially correlated channel model is used. We adopt the exponential correlation model of \cite{loyka2001channel}, i.e., $\pmb{R}_{jkl}$ is a Hermitian Toeplitz matrix with the first row given by $[1, r_{jkl}^{\ast}, \cdots , ( r_{jkl}^{\ast} )^{M-1} ]$, which is widely used in the literature \cite{bjornson2018massive, van2018large, bjornson2017random}. In particular, in this model $r_{jkl} = \kappa e^{j \phi_{jkl}}$ is the correlation coefficient, $\kappa \in [0,\; 1]$ is the correlation magnitude and $\phi_{jkl}$ is the user angle to the antenna array boresight. Unless otherwise specified, we assume $\kappa =0.4$, i.e., moderate spatial correlation. 

To compute the maximum symmetric SE of the RS scheme, two different approaches are utilized. Specifically, in the first approach, an achievable lower bound to the maximum symmetric SE is found by solving \eqref{ch4:eq51} and numerically searching over $0 \leq \mu \leq 1$ (with a step size of $0.02$) to find the optimum value of the power splitting coefficient. For each value of $M$, the average of the optimum choices of $\mu$ over $150$ realizations is also calculated and stored. In the second approach, rather than numerically optimizing $\mu$ in \eqref{ch4:eq51}, an achievable lower bound is calculated based on \eqref{ch4:eq51} but using the pre-computed average value of $\mu$, which is validated on $150$ new random realizations of user locations. As such, the computational cost of numerically optimizing $\mu$ in the first approach is now reduced in the second approach. 
 
Fig.~\ref{seven:zf} shows the performance of the different schemes with ZF, where the achieved SEs for RS are obtained from the two approaches explained above. Interestingly, it is revealed from these figures that for each $M$ it is sufficient to use only the pre-computed average values of $\mu$ as in the second approach. In other words, calculating the SEs using the second approach yields almost the same performance as that obtained from the first approach, showing the advantage of using pre-computed average values of $\mu$ in practical implementations of RS.
\begin{figure*}[t!] 
	\captionsetup[subfloat]{captionskip=3mm}
	\centering\hspace{-6mm}   
	\subfloat[\footnotesize{ZF}]{\includegraphics[scale=0.64]{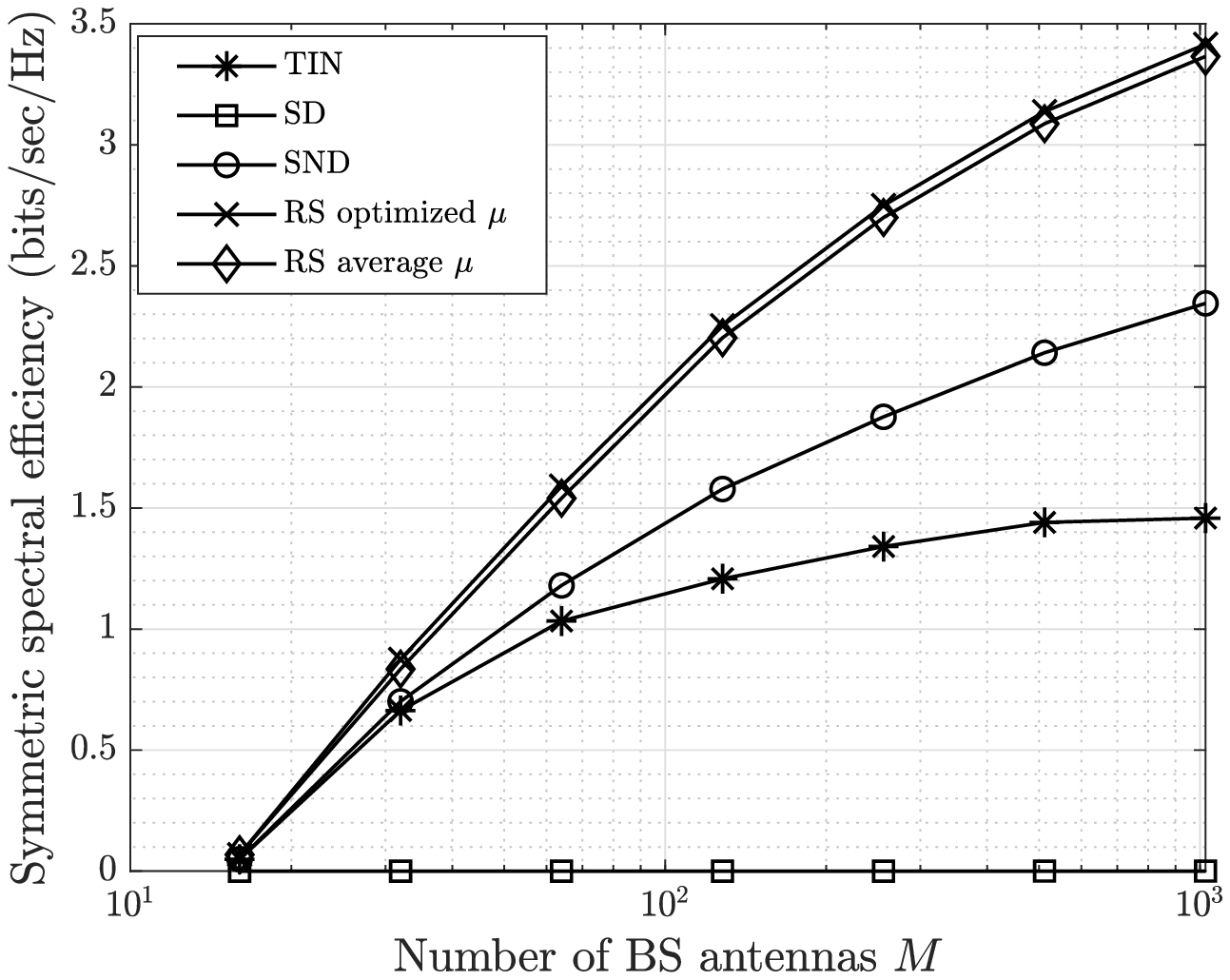}\label{seven:zf}}\hspace{-5mm}
	\subfloat[\footnotesize{RZF vs. ZF}]{\includegraphics[scale=0.64]{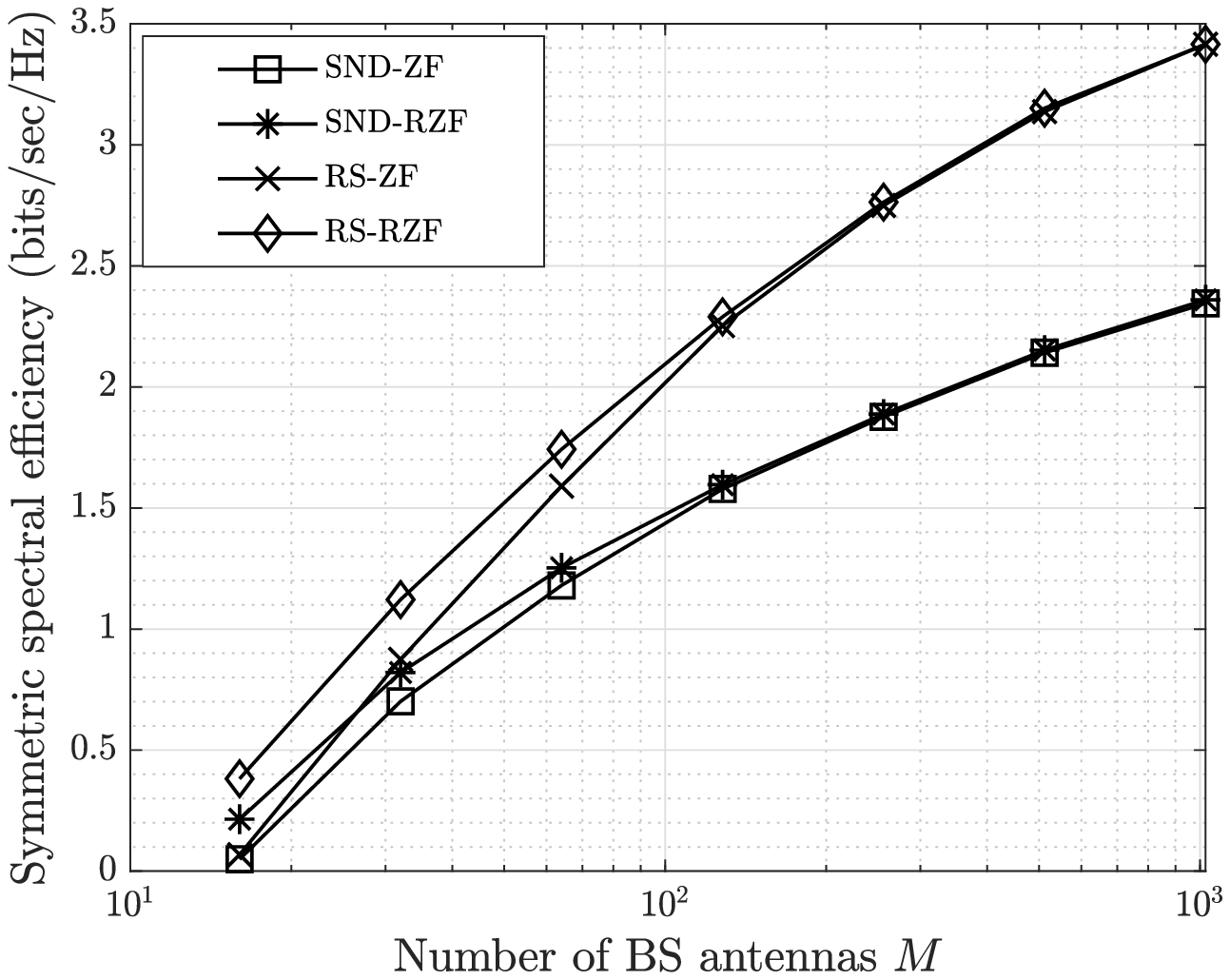}\label{fig:12:1}} 
	\caption{Performance of various schemes using maximum symmetric SE, when a spatially correlated Rayleigh fading channel model is used: (a) TIN, SD, SND and RS, with ZF precoding, (b) SND and RS with either ZF or RZF. \label{performance:correlated}}
\end{figure*}

Fig.~\ref{seven:zf} also confirms that while both cases of SND and RS outperform TIN and their performance improves by increasing $M$, due to the additional flexibilities enabled by partial decoding, RS achieves significantly larger SEs than SND. In particular, while the gain provided by SND over TIN is about $30\%$ and $40\%$ for $M=128$ and $M=256$, respectively, and increases to about $60\%$ when $M=1024$, this gain for RS is at least\footnote{Recall that \eqref{ch4:eq51} provides an achievable lower bound on the performance of RS, and therefore the actual gain over TIN may be larger.} $86\%$ and $105\%$ for $M=128$ and $M=256$, respectively, and increases to at least $134\%$ when $M=1024$. Lastly, this figure confirms that SD performs poorly compared to other schemes, as it tries to \textit{uniquely} decode all pilot contamination interference terms regardless of their strength. This is as opposed to SND and RS, where pilot contamination interference terms are decoded non-uniquely either in their \textit{entirety} or \textit{partially}. In fact, as we will see in the sequel, SD can outperform TIN only when $M$ is extremely large and thus beyond practical limits.

Next, we investigate the performance of regularized zero-forcing (RZF), where the precoding matrix at BS $j$ is given by
\begin{equation}\label{rzf}
\pmb{W}_{j}^{\rm rzf} = \pmb{\hat{G}}_{jj} \left( \pmb{\hat{G}}_{jj}^{\dagger} \pmb{\hat{G}}_{jj} + \delta \pmb{I}_K \right)^{-1}, \quad \forall j,
\end{equation}
where $\delta$ is a regularization factor. Note that the choice of $\delta$ is arbitrary and could be further optimized (see for example \cite[Theorem~6]{jose2011pilot} and \cite{nguyen2019multi}). The two choices of $\delta=K / \rho_{\rm dl}$ (suggested by  \cite{peel2005vector}) and $\delta= M / \rho_{\rm dl}$ (suggested by \cite{hoydis2013massive}) were explored by simulation. The former provided better performance for the setup and system parameters considered. Therefore, in this paper we take $\delta = K / \rho_{\rm dl}$. It can be verified that, when $M$ is large, the diagonal entries of $\pmb{\hat{G}}_{jj}^{\dagger} \pmb{\hat{G}}_{jj}$ increase with $M$ and therefore the approximation $(  \pmb{\hat{G}}_{jj}^{\dagger} \pmb{\hat{G}}_{jj} + \delta \pmb{I}_K )^{-1} \approx ( \pmb{\hat{G}}_{jj}^{\dagger} \pmb{\hat{G}}_{jj} )^{-1}$ can be used. Hence, for large $M$, one expects the performance of RZF to resemble that of ZF. On the other hand, when $M$ is small, with a proper choice of the regularization factor, RZF can outperform ZF \cite{peel2005vector}. These results are confirmed in Fig.~\ref{fig:12:1}. In particular, this figure shows the performance of SND and RS when either ZF or RZF is applied at the BSs. It can be observed that when $M$ is moderately small (i.e., $M \leq 64$), there is a visible gain offered by RZF, while for large $M$ the performance of RZF converges to that of ZF. Moreover, RS achieves significantly higher SEs compared to SND, which is similar to the observations with ZF.     

Fig.~\ref{impact-kappa-user} shows the impact of changing the antenna correlation magnitude $\kappa$ and the number of users $K$ on performance of the different schemes, where $M=256$ and ZF precoding is used. The SEs achieved by SD are not shown here as it performs poorly for practical values of $M$. It is evident from Fig.~\ref{rate-magnitude} that increasing the correlation magnitude from $0$ (i.e., uncorrelated fading) to $0.8$ (i.e., strong spatial correlation) results in improving the performance of all schemes; hence, reducing the achieved performance gaps. Specifically, it can be observed that while SND offers a gain of $61\%$ over TIN in the uncorrelated regime ($\kappa =0$), RS again provides superior performance with a gain of at least $140\%$ in this regime. On the other hand, in the strong spatial correlation regime ($\kappa = 0.8$), while the gain of SND over TIN reduces to $11\%$, RS now provides a gain of at least $35\%$ over TIN. 
\begin{figure*}[t!] 
	\captionsetup[subfloat]{captionskip=3mm}
	\centering\hspace{-6mm}    
	\subfloat[\footnotesize{Impact of varying $\kappa$.}]{\includegraphics[scale=0.64]{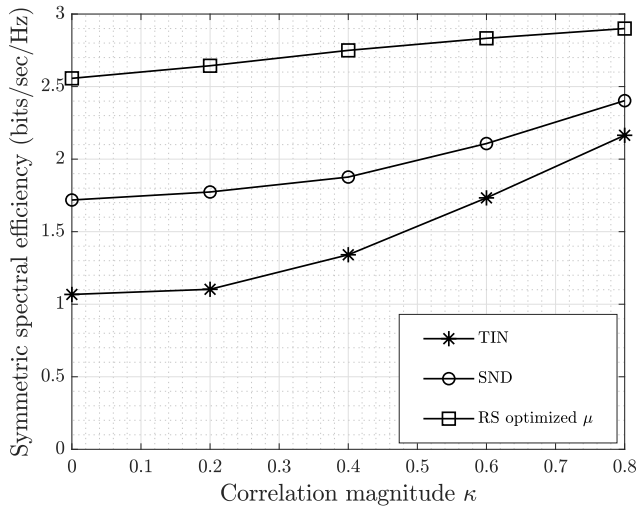}\label{rate-magnitude}}\hspace{-5mm} %
	\subfloat[\footnotesize{Impact of varying $K$.}]{\includegraphics[scale=0.64]{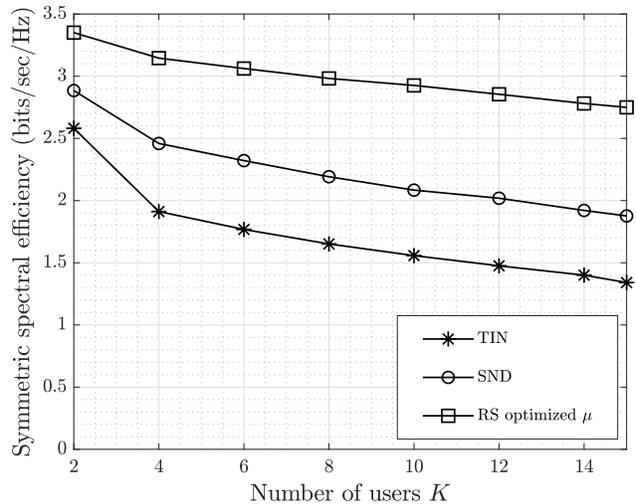}\label{rate-user}}
    \caption{Impact of increasing the correlation magnitude and the number of users on performance of TIN, SND and RS, with ZF precoding, when $M=256$: (a) $\kappa$ is varying while $K=15$, (b) $K$ is varying while $\kappa=0.4$. \label{impact-kappa-user}}
\end{figure*} 
The performance improvement seen by increasing the spatial correlation of the channel is in agreement with the results reported in \cite{bjornson2017book}. Particularly, it is known that spatial correlation can improve the quality of MMSE channel estimates; thus, resulting in reduced pilot contamination effects in massive MIMO systems. Therefore, it is expected that pilot contamination interference causes its most adverse impact when channel correlation is zero, and increasing spatial correlation alleviates this problem gradually.

In addition, Fig.~\ref{rate-user} reveals that increasing $K$ results in degrading the performance of all schemes. This is as expected, since serving a larger number of users leads to smaller symmetric SEs. Nonetheless, as $K$ increases, the achieved performance gains over TIN improve. In particular, while SND provides a gain of $11\%$ and $40\%$ for $K=2$ and $K=15$, respectively, the gain of RS over TIN is significantly better and at least $31\%$ for $K=2$, and increasing to at least $105\%$ when $K=15$.

Lastly, we study the impact of shadow fading on performance of the proposed schemes. In particular, we assume that a term associated with shadow fading is now added to the large-scale fading model of \eqref{pathloss} with a standard deviation of $\sigma_{\rm shadow}$ in dB. Fig.~\ref{rate-shadow} shows the achieved symmetric SEs of TIN, SND and RS, where the standard deviation of the shadow fading, $\sigma_{\rm shadow}$, varies from $0$ dB to $5$ dB. The parameters for this figure are the same as those in Fig.~\ref{rate-magnitude}, except that the correlation magnitude is now fixed at $\kappa = 0.4$. It can be observed that, as expected, by increasing shadow fading the SEs achieved by all schemes reduce. Nevertheless, as $\sigma_{\rm shadow}$ becomes larger the gains provided by SND and RS over TIN increase, with RS achieving superior performance compared to TIN and SND in all cases. In particular, when there is no shadowing in the path-loss model of \eqref{pathloss}, the gain provided by SND over TIN is $40\%$ and improves to more than a factor of $3$ when shadowing increases to $\sigma_{\rm shadow} = 3$ dB, whereas in the case of RS this gain is at least $105\%$ without shadowing effects and improves to more than a factor of $5$ when $\sigma_{\rm shadow} = 3$ dB. Furthermore, these gains continue to grow for larger values of $\sigma_{\rm shadow}$.
\begin{figure}[t!]
\centering
\hspace{-6mm}\includegraphics[scale=0.64]{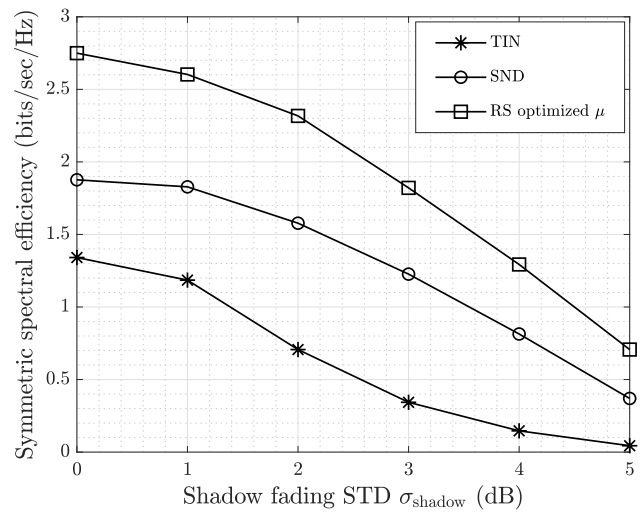}
\caption{Impact of increasing the standard deviation of shadow fading $\sigma_{\rm shadow}$ on performance of TIN, SND and RS, with ZF precoding, when $M=256$, $K=15$ and $\kappa=0.4$.\label{rate-shadow}}
\end{figure}
\subsection{Uncorrelated Channel}
We now consider the special case of uncorrelated Rayleigh fading, i.e., $\pmb{R}_{jkl} = \beta_{jkl} \pmb{I}_M$, with ZF precoding where the simulation parameters are the same as those in Fig.~\ref{performance:correlated}. To evaluate the performance, the average of the maximum symmetric SEs is calculated over $200$ random realizations of user locations. Also, using the closed-form expression of the rate lower bound in \eqref{thm:down:zf:1} for an uncorrelated channel, we are able to compute the performance for a significantly wider range of $M$, thus providing insights into the asymptotic performance limits of the different schemes. 

Fig.~\ref{small-m} shows these results for a range of moderately large $M$, while Fig.~\ref{large-m} shows the same for a range of extremely large $M$. While the latter covers a range of $M$ that is beyond practical values, the results of Fig.~\ref{large-m} can be used to confirm asymptotic performance limits as $M \rightarrow \infty$. Similar to the case of a correlated channel, it is evident that while the performance of all interference decoding schemes improves with increasing $M$, RS achieves significantly larger SEs compared to all other schemes. Furthermore, it can be seen that the symmetric SEs obtained using the optimized values of splitting coefficients for RS are almost the same as those obtained using the pre-computed average values; thus, reducing the optimization search space. Fig.~\ref{small-m} shows that SND provides a gain of about $45\%$ and $61\%$ over TIN when $M=128$ and $M=256$, respectively, and this gain reaches about $94\%$ when $M=1024$. On the other hand, due to the advantages offered by partial decoding, the gain provided by RS over TIN increases to at least $109\%$ and $140\%$ for $M=128$ and $M=256$, respectively, and improves to at least $191\%$ when $M=1024$, which is again much larger than SND. 
\begin{figure*}[t!] 
	\captionsetup[subfloat]{captionskip=3mm}
	\centering\hspace{-6mm}    
	\subfloat[\footnotesize{Moderately large $M$.}]{\includegraphics[scale=0.64]{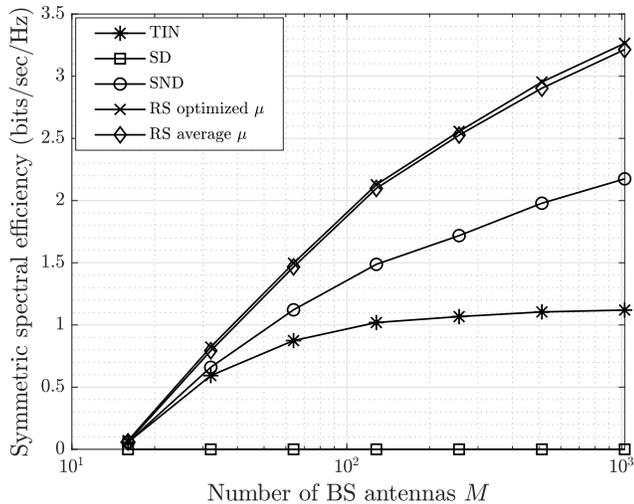}\label{small-m}}\hspace{-5mm} %
	\subfloat[\footnotesize{Extremely large $M$.}]{\includegraphics[scale=0.64]{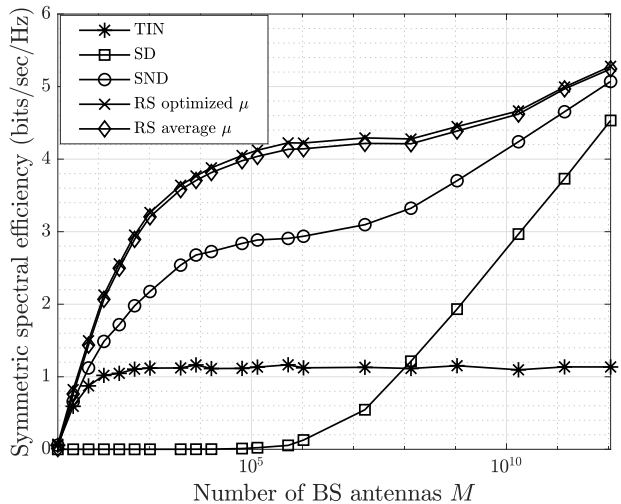}\label{large-m}}
	\caption{Performance of TIN, SD, SND and RS, using maximum symmetric SE, when the special case of an uncorrelated Rayleigh fading channel model is used: (a) Achieved symmetric SEs over the range of moderately large $M$, (b) Achieved symmetric SEs over the range of extremely large $M$. \label{uncorrelated}}
\end{figure*} 

It is also confirmed via Fig.~\ref{large-m} that, when $M$ grows unbounded, the performance of TIN saturates to a constant value, while the performance of SD, SND and RS continue to grow logarithmically with $M$, as expected. Moreover, it can be observed that when $M$ is extremely large (i.e., approximately $M > 10^8$) SD starts to outperform TIN, as it now becomes beneficial to uniquely decode pilot contamination interference. 

Interestingly, in Fig.~\ref{large-m}, one can notice that as $M$ increases, the performance gap between RS and SND gradually increases up to a point (i.e., approximately $ M \approx 10^5$), beyond which the gap to SND gradually diminishes, and they eventually converge. This means that as $M$ grows and thus the strength of the pilot contamination interfering signals increases, the power allocated to the inner layer (determined by $1 - \mu$) can be increased. Hence, as $M$ grows unbounded, one can non-uniquely decode the entire part of the interference terms under RS (i.e., $\mu \rightarrow 0$); thus achieving a performance close to that of SND.
\section{Conclusion}
In this paper, the problem of decoding pilot contamination interference was studied for the downlink of a multi-cell massive MIMO system. Using a worst-case uncorrelated noise technique, a general achievable rate lower bound was obtained, which was further specialized to ZF precoding for an uncorrelated channel. Moreover, a novel partial interference decoding (i.e., RS) scheme based on message splitting and non-unique decoding was proposed and an achievable rate region was established for this scheme. 
We show that finding the maximum symmetric SE over this region can be found by solving multiple LP problems.
To illustrate the performance, an achievable sub-region of the RS scheme was introduced that provides a lower bound to the performance of RS, yet achieving significantly larger maximum symmetric SEs compared to TIN, SD and SND for a practical number of antennas. Furthermore, the impact of increasing the correlation of the channel across antennas, the number of users and the degree of shadow fading was numerically investigated. In all scenarios, it was observed that RS maintains superior gain over TIN, SD and SND. It was also shown that one can replace the numerically-optimized value of the power splitting coefficient with its pre-computed average value, which gives rise to a negligible performance loss; thereby reducing the optimization search space.

One possible future direction 
may be to consider improving the lower bound in \eqref{ch4:eq51} by, for instance, enlarging the subset in \eqref{eq:subset}. Therefore, an interesting question is the investigation of which subsets to add that provide the most benefit. 
Another future direction is to consider metrics other than maximum symmetric SE. For example,  maximizing the geometric mean results in a proportional fair rate allocation. In this case, the problem is no longer an LP on each convex polytope sub-region, but nevertheless remains convex on the sub-regions. The problem may also potentially require revisiting the considered subsets in \eqref{eq:subset}.  Another direction would be to explore more complicated power and rate splits, although this may be challenging due to the size of the search space. Nevertheless, structural results and insights on how this splitting should be done can be beneficial and may reduce the optimization search space. 
Finally, the practical implementation of codes for partial interference decoding (as well as their decoding algorithms) has several  challenges. Among these are the design of rate-flexible codes with multiple layers that can adapt to frequency offsets (due to transmissions from different BSs) and timing offsets (caused by different propagation delays) between layers at different users as well as the challenge of error propagation when incorrectly decoding a layer. We conclude by highlighting the promising approach of \cite{wang2020sliding} which has proposed the use of sliding-widow codes to implement HK partial interference decoding.

\appendices 
\section{} 
\textbf{\textit{A-1) Proof of Lemma~1:}} Let $P_{1,j}, P_2, P_3$ and $P_4$ denote the variances of $\sqrt{\rho_{\rm dl}/\lambda_j} \mathbb{E} [ \pmb{g}_{jil}^{\dagger} \pmb{w}_{jij} ] s_j[i], j=1, ..., L,$ the second sum, the third sum and noise of $y_{il}$ in \eqref{eq:down:1}, respectively. 
Note that these are all zero-mean and uncorrelated. Hence, we have $\textrm{var} \left[ y_{il} \right] = P_1 + P_2+ P_3 + P_4$, where $P_1 = \sum_{j=1}^L P_{1,j}$. Since $\mathbb{E} [ \left\vert s_j [k] \right\vert^2 ] =1, \forall j, k,$ we obtain $P_{1,j} =  \sqrt{\rho_{\rm dl}/\lambda_j} \vert \mathbb{E} [ \pmb{g}_{jil}^{\dagger} \pmb{w}_{jij} ] \vert^2$. In addition, one can equivalently write $y_{il} =  \sum_{j=1}^L \sqrt{{\rho_{\rm dl}}/{\lambda_j}} \sum_{k=1}^K \pmb{g}_{jil}^{\dag} \pmb{w}_{jkj} s_j [k] + z_{il} $, and therefore obtain $\textrm{var} \left[ y_{il} \right] = \sum_{j=1}^L \sum_{k=1}^K \dfrac{\rho_{\rm dl}}{\lambda_j} \mathbb{E} [  \vert \pmb{g}_{jil}^{\dagger} \pmb{w}_{jkj} \vert^2 ] + 1,$ since $\mathbb{E} [ \vert z_{il} \vert^2 ] = 1$. Thus, $P_2 + P_3 + P_4$ is found as  
\begin{align}
\nonumber P_2 + P_3 + P_4 &= \sum_{j=1}^L \sum_{k=1}^K \dfrac{\rho_{\rm dl}}{\lambda_j} \mathbb{E} \left[  \left\vert \pmb{g}_{jil}^{\dagger} \pmb{w}_{jkj} \right\vert^2 \right] \\
&\hspace{3mm}+ 1 - \sum_{j=1}^L \dfrac{\rho_{\rm dl}}{\lambda_j} \left\vert \mathbb{E} \left[ \pmb{g}_{jil}^{\dagger} \pmb{w}_{jij} \right] \right\vert^2. \label{app:a:2:4}
\end{align} 
Noting that $\lbrace s_j[i] \rbrace_{j \in \Omega}$ are the only signals that are decoded jointly, the remaining signals in the first sum in \eqref{eq:down:1} will be treated as noise in the lower bound of \eqref{eq:thm:2:1}. Hence, to compute the denominator, one should add $\sum_{j \in \Omega^c} P_{1,j}$ to \eqref{app:a:2:4}, i.e.,
\begin{align}
\nonumber P_2 + P_3 + P_4 + \sum_{j \in \Omega^c} P_{1,j} &= \sum_{j=1}^L \sum_{k=1}^K \dfrac{\rho_{\rm dl}}{\lambda_j} \mathbb{E} \left[  \left\vert \pmb{g}_{jil}^{\dagger} \pmb{w}_{jkj} \right\vert^2 \right] \\
\nonumber &\hspace{3mm}+ 1 - \sum_{j \in \Omega} \dfrac{\rho_{\rm dl}}{\lambda_j} \left\vert \mathbb{E} \left[ \pmb{g}_{jil}^{\dagger} \pmb{w}_{jij} \right] \right\vert^2  . \label{app:a:2:4:1} 
\end{align}
Lastly, for the numerator in the r.h.s of \eqref{eq:thm:2:1} we have 
\begin{equation}\label{app:a:2:5}
\textrm{var} \left[ \sum_{j \in \omega}  \sqrt{\dfrac{\rho_{\rm dl}}{\lambda_j}} \mathbb{E} \left[ \pmb{g}_{jil}^{\dagger} \pmb{w}_{jij} \right] s_j[i] \right] = \sum_{j \in \omega} \dfrac{\rho_{\rm dl}}{\lambda_j} \left\vert \mathbb{E} \left[ \pmb{g}_{jil}^{\dagger} \pmb{w}_{jij} \right] \right\vert^2. 
\end{equation} 
Therefore, one can directly apply \cite[Lemma~1]{8913624} to obtain the required lower bound.

\textbf{\textit{A-2) Proof of Theorem~1:}} We start by computing the variance of the desired signals in \eqref{eq:down:zf:b}, i.e., $\sum_{j \in \omega} \sqrt{\rho_{\rm dl} / \lambda_j^{\rm zf}} \left( \beta_{jil} / \beta_{jij} \right) s_j [i]$. It is readily verified that this is $\sum_{j \in \omega} \left( \rho_{\rm dl} / \lambda_j^{\rm zf} \right)  \left( \beta_{jil} / \beta_{jij}  \right)^2$, since $\mathbb{E} [ \left\vert s_j [k] \right\vert^2 ] =1, \forall j, k$. For the variance of the additive noise, $z_{il}^{\prime \prime}$, we have $\textrm{var} \left[ z_{il}^{\prime \prime} \right] = \textrm{var} \left[ \; \textrm{Interference due to estimation error} \; \right] + \textrm{var} \left[ z_{il} \right]$, which is due to the fact that $z_{il}$ is uncorrelated from the interference caused by the estimation error. For the first variance we obtain
\begin{align}
\nonumber &\textrm{var} \left[  \textrm{Interference due to estimation error}  \right] \\ 
&\hspace{5mm}=\sum_{j=1}^L \left( \frac{\rho_{\rm dl} }{\lambda_j^{\rm zf}} \right)  \sum_{k=1}^K  \mathbb{E} \left[ \left\vert  \pmb{\epsilon}_{jil}^{\dagger} \pmb{w}_{jkj} \right\vert^2 \right] \label{app:a:8:3}\\
&\hspace{5mm}= \sum_{j=1}^L \left( \frac{\rho_{\rm dl} }{\lambda_j^{\rm zf}} \right)  \sum_{k=1}^K \mathbb{E}  \left[ \textrm{tr} \left(  \pmb{\epsilon}_{jil}^{\dagger} \pmb{w}_{jkj}     \pmb{w}_{jkj}^{\dagger} \pmb{\epsilon}_{jil}   \right) \right] \\
&\hspace{5mm}= \sum_{j=1}^L \left( \frac{\rho_{\rm dl} }{\lambda_j^{\rm zf}} \right)  \sum_{k=1}^K \textrm{tr}  \left( \mathbb{E} \left[ \pmb{\epsilon}_{jil} \pmb{\epsilon}_{jil}^{\dagger} \right] \mathbb{E} \left[  \pmb{w}_{jkj} \pmb{w}_{jkj}^{\dagger}  \right] \right) \\
\nonumber &\hspace{5mm}=  \sum_{j=1}^L \left( \frac{\rho_{\rm dl} }{\lambda_j^{\rm zf}} \right)  \sum_{k=1}^K  \left( \beta_{jil} -  \sqrt{\rho_{\rm p}} \beta_{jil} \alpha_{jil} \right) \\
\nonumber &\hspace{32mm}\times\frac{1}{\left( M - K \right) \sqrt{\rho_{\rm p}} \beta_{jkj} \alpha_{jkj}},
\end{align} 
where in the last step we have used the following standard result in random matrix theory \cite{tulino2004random} 
\begin{equation}\label{app:a:8:4}
\mathbb{E} \left[  \pmb{w}_{jkj}^{\dagger} \pmb{w}_{jkj}  \right] = \frac{1}{\left( M - K \right) \sqrt{\rho_{\rm p}} \beta_{jkj} \alpha_{jkj}}.
\end{equation}
Also, for the variance of the zero-mean Gaussian noise $z_{il}$, we have $\mathbb{E} [ \left\vert z_{il} \right\vert^2 ] =1$. Lastly, since $\lbrace s_j[i] \rbrace_{j \in \Omega}$ are the only signals that are decoded jointly, the remaining signals in the first sum in \eqref{eq:down:zf:c} will be treated as noise. Thus, to compute the variance of the effective noise, one should further add $\sum_{j \in \Omega^c} \left( \rho_{\rm dl} / \lambda_j^{\rm zf} \right)  \left( \beta_{jil} / \beta_{jij}  \right)^2$ to the denominator in \eqref{thm:down:zf:1}. Therefore, \cite[Lemma~1]{8913624} can be directly applied to obtain the required lower bound. 

\textbf{\textit{A-3) An achievable region for RS:}} Following a technique used in \cite{bandemer2015optimal}, we provide analysis of the probability of error for the proposed RS scheme, when applied to the case of $L=2$. First, note that after dropping the index $i$, the cloud center and the satellite codeword generated at BS $l$, $l=1, 2,$ are given by $\pmb{s}_l^{(b)} ( m_l^{(b)} )$, and $\pmb{s}_l ( m_l^{(a)}, m_l^{(b)} )$, respectively. We only show the achievability proof at receiver $1$, i.e., user of cell $1$, as a similar analysis can be applied at receiver $2$, i.e., user of cell $2$.

Receiver $1$ tries to \textit{uniquely} recover both parts of its intended signal's message, $(m_1^{(a)}, m_1^{(b)})$ and to \textit{non-uniquely} recover messages from each layer of the interfering signal, $(m_2^{(a)}, m_2^{(b)})$. Therefore, receiver $1$ finds the unique pair $(\hat{m}_1^{(a)}, \hat{m}_1^{(b)})$ such that
\begin{align}\label{eq:app:rs:1}
&\left( \pmb{s}_1^{(b)} ( \hat{m}_1^{(b)} ), \pmb{s}_1 ( \hat{m}_1^{(a)}, \hat{m}_1^{(b)} ), \pmb{s}_2^{(b)} ( m_2^{(b)} ), \pmb{s}_2 ( m_2^{(a)}, m_2^{(b)} ), \pmb{y}_1 \right) \nonumber \\
&\qquad \qquad \qquad \in \mathcal{T}_{\epsilon}^n, \; \textrm{for some} \; ( m_2^{(a)}, m_2^{(b)} ),
\end{align}  
where $\mathcal{T}_{\epsilon}^n$ is the set of $\epsilon$-typical $n$-sequences (see \cite[Section 2.4]{el2011network} for definition of typical sets). 

Assume without loss of generality that the message pairs $(m_1^{(a)}, m_1^{(b)}) = (1, 1)$ and $(m_2^{(a)}, m_2^{(b)}) = (1, 1)$ are sent. Receiver $1$ declares an error if one or both of the following error events happen:
\begin{align}\label{eq:app:rs:2} 
E_1 &= \left\lbrace \left(  \pmb{s}_1^{(b)} ( 1 ), \pmb{s}_1 ( 1, 1 ), \pmb{s}_2^{(b)} ( 1 ), \pmb{s}_2 ( 1, 1 ), \pmb{y}_1 \right) \notin \mathcal{T}_{\epsilon}^n \right\rbrace \\
E_2 &= \Big\lbrace  \Big( \pmb{s}_1^{(b)} ( m_1^{(b)} ), \pmb{s}_1 ( m_1^{(a)}, m_1^{(b)} ), \pmb{s}_2^{(b)} ( m_2^{(b)} ), \nonumber \\ 
&\qquad \qquad \qquad \qquad \quad \pmb{s}_2 ( m_2^{(a)}, m_2^{(b)} ), \pmb{y}_1 \Big) \in \mathcal{T}_{\epsilon}^n, \nonumber \\
&\qquad \qquad \qquad \qquad \quad \textrm{for some} \; (m_1^{(a)}, m_1^{(b)}) \neq (1, 1), \; \nonumber\\ 
&\qquad \qquad \qquad \qquad \quad \textrm{and some} \; (m_2^{(a)}, m_2^{(b)})  \Big\rbrace.
\end{align}
By the law of large numbers, $P(E_1) \rightarrow 0$, as $n \rightarrow \infty$. We bound $P(E_2)$ in three different ways. As in \cite{bandemer2015optimal}, note that the joint typicality of the tuple 
$( \pmb{s}_1^{(b)} ( m_1^{(b)} ), \pmb{s}_1 ( m_1^{(a)}, m_1^{(b)} ), \pmb{s}_2^{(b)} ( m_2^{(b)} ), \pmb{s}_2 ( m_2^{(a)}, m_2^{(b)} ), \pmb{y}_1 )$ 
implies that $( \pmb{s}_1^{(b)} ( m_1^{(b)} ), \pmb{s}_1 ( m_1^{(a)}, m_1^{(b)} ), \pmb{y}_1 ) \in \mathcal{T}_{\epsilon}^n$, 
i.e., the triple 
$( \pmb{s}_1^{(b)} ( m_1^{(b)} ), \pmb{s}_1 ( m_1^{(a)}, m_1^{(b)} ), \pmb{y}_1 )$ 
is jointly typical. Hence, 
$E_2 \subseteq \lbrace  ( \pmb{s}_1^{(b)} ( m_1^{(b)} ), \pmb{s}_1 ( m_1^{(a)}, m_1^{(b)} ), \pmb{y}_1 )$ $\in \mathcal{T}_{\epsilon}^n,$  for some $\; (m_1^{(a)}, m_1^{(b)}) \neq (1, 1)  \rbrace = E_{21}$. 
Note that $E_{21}$ can be partitioned into the following 3 events:
\begin{align}
E_{21}^{(1)} &= \Big\lbrace \left( \pmb{s}_1^{(b)} ( m_1^{(b)} ), \pmb{s}_1 ( 1, m_1^{(b)} ), \pmb{y}_1 \right) \in \mathcal{T}_{\epsilon}^n, \nonumber \\
&\qquad\qquad\qquad\quad \textrm{for some} \; m_1^{(b)} \neq 1 \Big\rbrace, \\
E_{21}^{(2)} &= \Big\lbrace \left( \pmb{s}_1^{(b)} ( 1 ), \pmb{s}_1 ( m_1^{(a)}, 1 ), \pmb{y}_1 \right) \in \mathcal{T}_{\epsilon}^n, \; \nonumber \\
&\qquad\qquad\qquad\quad \textrm{for some} \; m_1^{(a)} \neq 1 \Big\rbrace, \\
E_{21}^{(3)} &= \Big\lbrace \left( \pmb{s}_1^{(b)} ( m_1{(b)} ), \pmb{s}_1 ( m_1^{(a)}, m_1^{(b)} ), \pmb{y}_1 \right) \in \mathcal{T}_{\epsilon}^n, \; \nonumber \\
&\qquad\qquad\qquad\quad
\textrm{for some} \; (m_1^{(a)}, m_1^{(b)}) \neq (1, 1) \Big\rbrace,
\end{align}
leading to $P(E_{21}) \leq \sum_{t=1}^3 P(E_{21}^{(t)})$. By the packing lemma, \cite[Section 3.2]{el2011network}, $P(E_{21}^{(1)})$ through $P(E_{21}^{(3)})$ tend to zero, as $n \rightarrow \infty$, if the following constraints are satisfied
\begin{align}\label{eq:app:rs:5}
R_1^{\rm (b)} &\leq I \left( s_1, s_1^{(b)} ; y_1 \right)  \\
R_1^{\rm (a)} &\leq  I \left( s_1 ; y_1 \vert s_1^{(b)} \right) \label{app:rs:5:1} \\
R_1^{\rm (a)} + R_1^{\rm (b)} &\leq  I  \left( s_1, s_1^{(b)} ; y_1  \right). \label{app:rs:5:2}
\end{align}
Notice that due to the codewords construction, the r.h.s in \eqref{eq:app:rs:5} and \eqref{app:rs:5:2} are identical, however the former is not necessary since the latter is the tighter condition. Therefore, we are left only with two rate constraints, \eqref{app:rs:5:1} and \eqref{app:rs:5:2}. Further note that the special structure of the codewords yields $I  ( s_1, s_1^{(b)} ; y_1  ) = I  \left( s_1 ; y_1  \right)$. 

\sloppy In addition, note that the joint typicality of the tuple $( \pmb{s}_1^{(b)} ( m_1^{(b)} ),$ $\pmb{s}_1 ( m_1^{(a)}, m_1^{(b)} ),$ $\pmb{s}_2^{(b)} ( m_2^{(b)} ),$ $\pmb{s}_2 ( m_2^{(a)}, m_2^{(b)} ),$ $\pmb{y}_1 )$ implies that $( \pmb{s}_1^{(b)} ( m_1^{(b)} ),$ $\pmb{s}_1 ( m_1^{(a)}, m_1^{(b)} ),$ $\pmb{s}_2^{(b)} ( m_2^{(b)} ), \pmb{y}_1)$ $\in$ $\mathcal{T}_{\epsilon}^n$, i.e., the quadruple $( \pmb{s}_1^{(b)} ( m_1^{(b)} ),$ $\pmb{s}_1 ( m_1^{(a)}, m_1^{(b)} ),$ $\pmb{s}_2^{(b)} ( m_2^{(b)} ),$ $\pmb{y}_1 )$ is jointly typical. Consequently, $E_2$ $\subseteq$ $\lbrace  ( \pmb{s}_1^{(b)} ( m_1^{(b)} ),$ $ \pmb{s}_1 ( m_1^{(a)}, m_1^{(b)} ),$ $ \pmb{s}_2^{(b)} ( m_2^{(b)} ),$ $\pmb{y}_1 )$ $\in$ $\mathcal{T}_{\epsilon}^n,$ for some $(m_1^{(a)}, m_1^{(b)}) \neq (1, 1), \; \textrm{and some} \; m_2^{(b)}   \rbrace = E_{22}$. The event $E_{22}$ can be partitioned into the following 6 events:
\begin{align}
E_{22}^{(1)} &= \Big\lbrace \left( \pmb{s}_1^{(b)} ( m_1^{(b)} ), \pmb{s}_1 ( 1, m_1^{(b)} ), \pmb{s}_2^{(b)} ( 1 ), \pmb{y}_1 \right) \in \mathcal{T}_{\epsilon}^n, \nonumber \\ 
&\qquad\quad\;\; \textrm{for some} \; m_1^{(b)} \neq 1  \Big\rbrace,  \label{app:rs:7:1} \\ 
E_{22}^{(2)} &= \Big\lbrace  \left( \pmb{s}_1^{(b)} ( 1 ), \pmb{s}_1 ( m_1^{(a)}, 1 ), \pmb{s}_2^{(b)} ( 1 ), \pmb{y}_1 \right) \in \mathcal{T}_{\epsilon}^n, \nonumber \\
&\qquad\quad\;\; \textrm{for some} \; m_1^{(a)} \neq 1 \Big\rbrace \label{app:rs:7:2}\\
E_{22}^{(3)} &= \Big\lbrace \left( \pmb{s}_1^{(b)} ( m_1^{(b)} ), \pmb{s}_1 ( m_1^{(a)}, m_1^{(b)} ), \pmb{s}_2^{(b)} ( 1 ), \pmb{y}_1 \right) \in \mathcal{T}_{\epsilon}^n, \nonumber \\
&\qquad\quad\;\; \textrm{for some} \; (m_1^{(a)}, m_1^{(b)}) \neq (1, 1)  \Big\rbrace \label{app:rs:7:3}\\
E_{22}^{(4)} &= \Big\lbrace \left( \pmb{s}_1^{(b)} ( m_1^{(b)} ), \pmb{s}_1 ( 1, m_1^{(b)} ), \pmb{s}_2^{(b)} ( m_2^{(b)} ), \pmb{y}_1 \right) \in \mathcal{T}_{\epsilon}^n, \nonumber \\
&\qquad\quad\;\; \textrm{for some} \;   m_1^{(b)} \neq 1,  \textrm{and some} \; m_2^{(b)} \neq 1  \Big\rbrace \label{app:rs:7:4}\\
E_{22}^{(5)} &= \Big\lbrace \left( \pmb{s}_1^{(b)} ( 1 ), \pmb{s}_1 ( m_1^{(a)}, 1 ), \pmb{s}_2^{(b)} ( m_2^{(b)} ), \pmb{y}_1 \right) \in \mathcal{T}_{\epsilon}^n, \nonumber \\
&\qquad\quad\;\; \textrm{for some} \;  m_1^{(a)} \neq 1, \; \textrm{and some} \; m_2^{(b)} \neq 1   \Big\rbrace \label{app:rs:7:5}\\
E_{22}^{(6)} &= \Big\lbrace  \left( \pmb{s}_1^{(b)} ( m_1^{(b)} ), \pmb{s}_1 ( m_1^{(a)}, m_1^{(b)} ), \pmb{s}_2^{(b)} ( m_2^{(b)} ), \pmb{y}_1 \right) \in \mathcal{T}_{\epsilon}^n,  \nonumber \\
&\qquad\quad\;\; \textrm{for some} \; (m_1^{(a)}, m_1^{(b)}) \neq (1, 1), \nonumber \\
&\qquad\quad\;\; \textrm{and some} \; m_2^{(b)} \neq 1  \Big\rbrace, \label{app:rs:7:6}
\end{align}
leading to $P(E_{22}) \leq \sum_{t=1}^6 P(E_{22}^{(t)})$. By the packing lemma, the probabilities $P(E_{22}^{(1)})$ through $P(E_{22}^{(6)})$ all tend to zero, as $n \rightarrow \infty$, if the following constraints are satisfied
\begin{align}\label{eq:app:rs:9}
R_1^{\rm (b)} &\leq I \left( s_1, s_1^{(b)} ; y_1 \vert s_2^{(b)}  \right)  \\
R_1^{\rm (a)} &\leq  I \left( s_1 ; y_1 \vert s_1^{(b)}, s_2^{(b)}  \right) \label{app:rs:8:1}\\
R_1^{\rm (a)} + R_1^{\rm (b)} &\leq  I \left( s_1, s_1^{(b)} ; y_1 \vert s_2^{(b)}  \right) \label{app:rs:8:2}\\
R_1^{\rm (b)} + R_2^{\rm (b)} &\leq  I \left(  s_1, s_1^{(b)}, s_2^{(b)} ; y_1 \right) \label{app:rs:8:3}\\
R_1^{\rm (a)} + R_2^{\rm (b)} &\leq I \left( s_1, s_2^{(b)} ; y_1 \vert s_1^{(b)} \right) \label{app:rs:8:4}\\
R_1^{\rm (a)} + R_1^{\rm (b)} +  R_2^{\rm (b)}  &\leq  I \left(  s_1, s_1^{(b)}, s_2^{(b)} ; y_1 \right) . \label{app:rs:8:5}  
\end{align}
Notice that due to the codewords construction, the r.h.s of \eqref{eq:app:rs:9} and \eqref{app:rs:8:2} are identical, however the latter is the tighter condition and thus the former can be omitted. Similarly, it is verified that \eqref{app:rs:8:3} is not necessary, since the constraint of \eqref{app:rs:8:5} is the tighter condition. As such, by removing \eqref{eq:app:rs:9} and \eqref{app:rs:8:3}, we are left with only four necessary constraints, i.e., \eqref{app:rs:8:1}, \eqref{app:rs:8:2}, \eqref{app:rs:8:4} and \eqref{app:rs:8:5}. Also, due to the structure of the codewords, we have $I ( s_1, s_1^{(b)} ; y_1 \vert s_2^{(b)}  ) = I ( s_1 ; y_1 \vert s_2^{(b)}  )$ and $I (  s_1, s_1^{(b)}, s_2^{(b)} ; y_1 ) = I (  s_1, s_2^{(b)} ; y_1 )$.

Lastly, the third way to bound $P(E_2)$ is to partition $E_2$ into the following 12 events:
\begin{align*}
\nonumber E_2^{(1)} &= \Big\lbrace \Big( \pmb{s}_1^{(b)} ( m_1^{(b)} ), \pmb{s}_1 ( 1, m_1^{(b)} ), \pmb{s}_2^{(b)} ( 1 ), \\
&\qquad\qquad \pmb{s}_2 ( 1, 1 ), \pmb{y}_1 \Big) \in \mathcal{T}_{\epsilon}^n, \textrm{for some} \; m_1^{(b)} \neq 1 \Big\rbrace, \\
\nonumber E_2^{(2)} &= \Big\lbrace \Big( \pmb{s}_1^{(b)} ( 1 ), \pmb{s}_1 ( m_1^{(a)}, 1 ), \pmb{s}_2^{(b)} ( 1 ), \\
&\qquad\qquad  \pmb{s}_2 ( 1, 1 ), \pmb{y}_1 \Big) \in \mathcal{T}_{\epsilon}^n, \textrm{for some} \; m_1^{(a)} \neq 1 \Big\rbrace,  \\
\nonumber E_2^{(3)} &= \Big\lbrace \Big( \pmb{s}_1^{(b)} ( m_1^{(b)} ), \pmb{s}_1 ( m_1^{(a)}, m_1^{(b)} ),  \\
&\qquad\qquad \pmb{s}_2^{(b)} ( 1 ), \pmb{s}_2 ( 1, 1 ), \pmb{y}_1 \Big) \in \mathcal{T}_{\epsilon}^n, \nonumber \\
&\qquad\qquad \textrm{for some} \; (m_1^{(a)}, m_1^{(b)}) \neq (1,1) \Big\rbrace, \\
\nonumber E_2^{(4)} &= \Big\lbrace \Big( \pmb{s}_1^{(b)} ( m_1^{(b)} ), \pmb{s}_1 ( 1, m_1^{(b)} ), \\
&\qquad\qquad \pmb{s}_2^{(b)} ( m_2^{(b)} ), \pmb{s}_2 ( 1, m_2^{(b)} ), \pmb{y}_1 \Big) \in \mathcal{T}_{\epsilon}^n, \nonumber \\
&\qquad\qquad \textrm{for some} \; m_1^{(b)} \neq 1,  \textrm{and some} \; m_2^{(b)} \neq 1 \Big\rbrace, \\
\nonumber E_2^{(5)} &= \Big\lbrace \Big( \pmb{s}_1^{(b)} ( 1 ), \pmb{s}_1 ( m_1^{(a)}, 1 ), \\
&\qquad\qquad \pmb{s}_2^{(b)} ( m_2^{(b)} ), \pmb{s}_2 ( 1, m_2^{(b)} ), \pmb{y}_1 \Big) \in \mathcal{T}_{\epsilon}^n,  \nonumber \\
&\qquad\qquad \textrm{for some} \; m_1^{(a)}  \neq 1,  \textrm{and some} \; m_2^{(b)} \neq  1 \Big\rbrace, \\
\nonumber E_2^{(6)} &= \Big\lbrace \Big( \pmb{s}_1^{(b)} ( m_1^{(b)} ), \pmb{s}_1 ( m_1^{(a)}, m_1^{(b)} ), \\
&\qquad\qquad  \pmb{s}_2^{(b)} ( m_2^{(b)} ), \pmb{s}_2 ( 1, m_2^{(b)} ), \pmb{y}_1 \Big) \in \mathcal{T}_{\epsilon}^n, \nonumber \\
&\qquad\qquad \textrm{for some} \; (m_1^{(a)}, m_1^{(b)} ) \neq (1, 1), \nonumber \nonumber \\
&\qquad\qquad  \textrm{and some} \; m_2^{(b)}  \neq 1 \Big\rbrace, \\
\nonumber E_2^{(7)} &=  \Big\lbrace \Big( \pmb{s}_1^{(b)} ( m_1^{(b)} ), \pmb{s}_1 ( 1, m_1^{(b)} ), \\
&\qquad\qquad \pmb{s}_2^{(b)} ( 1 ), \pmb{s}_2 ( m_2^{(a)}, 1 ), \pmb{y}_1 \Big) \in \mathcal{T}_{\epsilon}^n,  \nonumber \\
&\qquad\qquad \textrm{for some} \; m_1^{(b)} \neq 1,  \textrm{and some} \; m_2^{(a)}  \neq 1 \Big\rbrace, \\
\nonumber E_2^{(8)} &= \Big\lbrace \Big( \pmb{s}_1^{(b)} ( 1 ), \pmb{s}_1 ( m_1^{(a)}, 1 ), \\
&\qquad\qquad \pmb{s}_2^{(b)} ( 1 ), \pmb{s}_2 ( m_2^{(a)}, 1 ), \pmb{y}_1 \Big) \in \mathcal{T}_{\epsilon}^n, \nonumber \\
&\qquad\qquad \textrm{for some} \; m_1^{(a)} \neq 1,  \textrm{and some} \; m_2^{(a)}  \neq 1 \Big\rbrace, \\
\nonumber E_2^{(9)} &= \Big\lbrace \Big( \pmb{s}_1^{(b)} ( m_1^{(b)} ), \pmb{s}_1 ( m_1^{(a)}, m_1^{(b)} ), \\
&\qquad\qquad \pmb{s}_2^{(b)} ( 1 ), \pmb{s}_2 ( m_2^{(a)}, 1 ), \pmb{y}_1 \Big) \in \mathcal{T}_{\epsilon}^n, \nonumber \\
&\qquad\qquad
\textrm{for some} \; (m_1^{(a)}, m_1^{(b)}) \neq (1, 1), \nonumber \\
&\qquad\qquad \textrm{and some} \; m_2^{(a)}  \neq 1 \Big\rbrace, \\
\nonumber E_2^{(10)} &= \Big\lbrace \Big( \pmb{s}_1^{(b)} ( m_1^{(b)} ), \pmb{s}_1 (1, m_1^{(b)} ), \\
&\qquad\qquad \pmb{s}_2^{(b)} (  m_2^{(b)} ), \pmb{s}_2 ( m_2^{(a)},  m_2^{(a)} ), \pmb{y}_1 \Big) \in \mathcal{T}_{\epsilon}^n, \nonumber \\
&\qquad\qquad \textrm{for some} \;  m_1^{(b)} \neq 1, \nonumber \\
&\qquad\qquad \textrm{and some} \; (m_2^{(a)}, m_2^{(b)}) \neq (1, 1) \Big\rbrace, \\
\nonumber E_2^{(11)} &= \Big\lbrace \Big( \pmb{s}_1^{(b)} ( 1 ), \pmb{s}_1 (m_1^{(a)}, 1 ), \\
&\qquad\qquad \pmb{s}_2^{(b)} (  m_2^{(b)} ), \pmb{s}_2 ( m_2^{(a)},  m_2^{(a)} ), \pmb{y}_1 \Big) \in \mathcal{T}_{\epsilon}^n, \nonumber \\
&\qquad\qquad \textrm{for some} \;  m_1^{(a)} \neq 1, \nonumber  \\
&\qquad\qquad \textrm{and some} \; (m_2^{(a)}, m_2^{(b)}) \neq (1, 1) \Big\rbrace, \\
\nonumber E_2^{(12)} &= \Big\lbrace \Big( \pmb{s}_1^{(b)} ( m_1^{(b)} ), \pmb{s}_1 (m_1^{(a)}, m_1^{(b)} ), \\
&\qquad\qquad  \pmb{s}_2^{(b)} (  m_2^{(b)} ), \pmb{s}_2 ( m_2^{(a)},  m_2^{(a)} ), \pmb{y}_1 \Big) \in \mathcal{T}_{\epsilon}^n, \nonumber \\
&\qquad\qquad \textrm{for some} \;  ( m_1^{(a)}, m_1^{(b)}) \neq (1, 1), \nonumber \\
&\qquad\qquad \textrm{and some} \; (m_2^{(a)}, m_2^{(b)}) \neq (1, 1) \Big\rbrace, 
\end{align*}
leading to $P(E_2) \leq \sum_{t=1}^{12} P(E_2^{(t)}) $. Using the packing lemma, the probabilities $P(E_2^{(1)})$ through $P(E_2^{(12)})$ above tend to zero, as $n \rightarrow \infty$, if the following conditions are satisfied 
\begin{align}\label{eq:app:rs:12}
R_1^{\rm (b)} &\leq I \left( s_1, s_1^{(b)} ; y_1 \vert s_2, s_2^{(b)} \right)  \\
R_1^{\rm (a)} &\leq  I \left( s_1 ; y_1 \vert s_1^{(b)}, s_2, s_2^{(b)} \right) \label{app:rs:12:1}\\
R_1^{\rm (a)} + R_1^{\rm (b)} &\leq  I \left( s_1, s_1^{(b)} ; y_1 \vert s_2, s_2^{(b)} \right) \label{app:rs:12:2}\\
R_1^{\rm (b)} + R_2^{\rm (b)} &\leq  I \left( s_1, s_1^{(b)}, s_2, s_2^{(b)} ; y_1 \right) \label{app:rs:12:3}\\
R_1^{\rm (a)} + R_2^{\rm (b)} &\leq  I \left( s_1, s_2, s_2^{(b)} ; y_1 \vert s_1^{(b)} \right) \label{app:rs:12:4}\\
R_1^{\rm (a)} + R_1^{\rm (b)} + R_2^{\rm (b)} &\leq  I \left( s_1, s_1^{(b)}, s_2, s_2^{(b)} ; y_1 \right) \label{app:rs:12:5}\\
R_1^{\rm (b)} + R_2^{\rm (a)} &\leq  I \left( s_1, s_1^{(b)}, s_2 ; y_1 \vert s_2^{(b)} \right) \label{app:rs:12:6}\\
R_1^{\rm (a)} + R_2^{\rm (a)} &\leq  I \left( s_1, s_2 ; y_1 \vert s_1^{(b)}, s_2^{(b)} \right) \label{app:rs:12:7}\\
R_1^{\rm (a)} + R_1^{\rm (b)} + R_2^{\rm (a)} &\leq I \left( s_1, s_1^{(b)}, s_2 ; y_1 \vert s_2^{(b)} \right)  \label{app:rs:12:8}\\
R_1^{\rm (b)} + R_2^{\rm (a)} + R_2^{\rm (b)} &\leq  I \left( s_1, s_1^{(b)}, s_2, s_2^{(b)} ; y_1 \right) \label{app:rs:12:9}\\
R_1^{\rm (a)} + R_2^{\rm (a)} + R_2^{\rm (b)} &\leq  I \left( s_1, s_2, s_2^{(b)} ; y_1 \vert s_1^{(b)} \right) \label{app:rs:12:10}\\
R_1^{\rm (a)} + R_1^{\rm (b)} + R_2^{\rm (a)} + R_2^{\rm (b)} &\leq  I \left( s_1, s_1^{(b)}, s_2, s_2^{(b)} ; y_1 \right). \label{app:rs:12:11} 
\end{align}
It can be readily seen that the constraints of \eqref{eq:app:rs:12}, \eqref{app:rs:12:3}, \eqref{app:rs:12:4}, \eqref{app:rs:12:5}, \eqref{app:rs:12:6} and \eqref{app:rs:12:9} are redundant and can be removed. We are thus left with only 6 necessary constraints, i.e., \eqref{app:rs:12:1}, \eqref{app:rs:12:2}, \eqref{app:rs:12:7}, \eqref{app:rs:12:8}, \eqref{app:rs:12:10} and \eqref{app:rs:12:11}. Also, the code construction yields $I ( s_1, s_1^{(b)} ; y_1 \vert s_2, s_2^{(b)} )$ $=$ $I ( s_1 ; y_1 \vert s_2 )$, $I ( s_1 ; y_1 \vert s_1^{(b)}, s_2, s_2^{(b)} )$ $=$ $I ( s_1 ; y_1 \vert s_1^{(b)}, s_2 )$, $I ( s_1, s_1^{(b)}, s_2, s_2^{(b)} ; y_1 )$ $=$ $I ( s_1, s_2 ; y_1 )$, $I ( s_1, s_2, s_2^{(b)} ; y_1 \vert s_1^{(b)} )=I ( s_1, s_2 ; y_1 \vert s_1^{(b)} ),$ and $I ( s_1, s_1^{(b)}, s_2 ; y_1 \vert s_2^{(b)} ) = I ( s_1, s_2 ; y_1 \vert s_2^{(b)} )$.
 
Hence, by bounding $P(E_2)$ using these three different approaches, an achievable region is established at receiver $1$ (denoted by $\mathcal{R}_1^{\rm RS}$), which is the union of the three regions described above. One can similarly obtain the achievable region at receiver $2$ (denoted by $\mathcal{R}_2^{\rm RS}$) by replacing $y_1$ with $y_2$ and swapping appropriate indices. The network-wide achievable region obtained by the generalized RS scheme in conjunction with \textit{non-unique} decoding for a two-cell system can then be written in the following form 
\begin{equation}\label{app:rs:a}
\mathscr{R}^{\rm RS} = \bigcap_{l=1}^2 \mathcal{R}_{l}^{\rm RS}, 
\end{equation} 
where 
\begin{equation}\label{app:rs:b}
\mathcal{R}_{l}^{\rm RS} = \bigcup_{ \Omega_l \in \mathcal{S}_l } \mathcal{R}_{\textrm{MAC}(\Omega_l, l ) }^{\rm RS},  \quad l=1, 2, 
\end{equation}
and $\mathcal{R}_{\textrm{MAC}(\Omega_l, l ) }^{\rm RS}$ is a modified MAC region, which has less than $2^{\vert \Omega_l \vert} -1 $ rate constraints, as some of the constraints are removed from the regular MAC region (due to the codewords construction as explained above), and $\mathcal{S}_l, l=1, 2,$ are given by 
\begin{align}
\nonumber \mathcal{S}_1^{\rm RS}  &= \Big\lbrace \left\lbrace m_1^{(a)}, m_1^{(b)} \right\rbrace, \left\lbrace m_1^{(a)}, m_1^{(b)}, m_2^{(b)} \right\rbrace, \\
 &\hspace{29mm}\left\lbrace m_1^{(a)}, m_1^{(b)}, m_2^{(a)}, m_2^{(b)} \right\rbrace  \Big\rbrace, \\
\nonumber \mathcal{S}_2^{\rm RS}  &= \Big\lbrace \left\lbrace m_2^{(a)}, m_2^{(b)} \right\rbrace, \left\lbrace m_2^{(a)}, m_2^{(b)}, m_1^{(b)} \right\rbrace, \\
&\hspace{29mm}\left\lbrace m_2^{(a)}, m_2^{(b)}, m_1^{(a)}, m_1^{(b)} \right\rbrace  \Big\rbrace. 
\end{align}
Specifically, if $\Omega_l$ contains messages of both layers $(m_{j}^{(a)},m_{j}^{(b)}),$ for some $j$, then those constraints that involve $R_{j}^{\rm  (b)}$ but not $R_{j}^{\rm  (a)}$ are not needed and will thus be removed from the rate region. In particular, the constraints that are removed from each of the three regions described above are: \eqref{eq:app:rs:5} from the $1^{\rm st}$ region, \eqref{eq:app:rs:9} and \eqref{app:rs:8:3} from the $2^{\rm nd}$ region, \eqref{eq:app:rs:12}, \eqref{app:rs:12:3}, \eqref{app:rs:12:4}, \eqref{app:rs:12:5}, \eqref{app:rs:12:6} and \eqref{app:rs:12:9} from the $3^{\rm rd}$ region.

One can also re-write $\mathcal{S}_l^{\rm RS}$ as follows 
\begin{align}
\nonumber \mathcal{S}_l^{\rm RS} = &\Big\lbrace \left\lbrace m_l^{(a)}, m_l^{(b)} \right\rbrace \Big\rbrace \\ 
&\times \left\lbrace \emptyset , \left\lbrace m_j^{(b)} \right\rbrace , \left\lbrace m_j^{(a)} , m_j^{(b)} \right\rbrace  \right\rbrace, \quad l=1, 2, \quad j \neq l. \label{app:rs:d}
\end{align}

\end{document}